\documentclass[12pt]{article}

\usepackage[a4paper,margin=1in]{geometry}
\usepackage[T1]{fontenc}
\usepackage{microtype}

\usepackage{mathtools}
\usepackage{amsthm}

\usepackage{wasysym}
\usepackage{amssymb}

\usepackage{algorithm}
\usepackage{algpseudocode}

\usepackage{tikz}
\usetikzlibrary{positioning,fit,backgrounds,patterns}

\usepackage{xcolor}
\usepackage{booktabs}
\usepackage{colortbl}
\usepackage{array}
\usepackage{tabularx}
\usepackage{makecell}

\usepackage[hidelinks]{hyperref}
\usepackage[nameinlink,noabbrev]{cleveref}

\newtheorem{theorem}{Theorem}[section]
\newtheorem{lemma}[theorem]{Lemma}
\newtheorem{proposition}[theorem]{Proposition}
\newtheorem{corollary}[theorem]{Corollary}

\theoremstyle{definition}
\newtheorem{definition}[theorem]{Definition}
\newtheorem{example}[theorem]{Example}
\newtheorem{problemstatement}[theorem]{Problem}
\newtheorem{openquestion}[theorem]{Open Question}

\theoremstyle{remark}
\newtheorem{remark}[theorem]{Remark}

\title{%
  Right Divisibility in Erasing Semi-Thue Systems:\\
  A Minimal View of Intruder Deduction
}
\usepackage{authblk}
\author[1]{Raja O. P. Damanik} \author[1]{Alwen Tiu} 
\affil[1]{
    School of Computing, Australian National University\\ Canberra, Australia\\ \texttt{\{raja.damanik,alwen.tiu\}@anu.edu.au} 
}

\date{}

\begin{document}

\maketitle

\begin{abstract}
The intruder deduction problem is central to symbolic security-protocol analysis: it asks whether an attacker can derive a target message from observed messages using (cryptographic) operators available to the attacker. Although convergent rewrite systems provide canonical normal forms, deduction modulo convergent theories remains undecidable in general, and existing decidable fragments are often shaped by practical cryptographic examples.
In this paper, we study deduction from a minimal structural perspective. When all function symbols are unary, terms collapse to words and deduction becomes a right-divisibility problem for semi-Thue systems: given words $u$ and $v$ decide whether there exists $w$ such that $wu \equiv_S v$. We investigate this problem for several classes of semi-Thue systems and prove, to the best of our knowledge, new decidability results for convergent prefix-erasing and convergent suffix-erasing systems.
We then extend this perspective to term rewriting systems whose rules erase contexts while lifting selected subterms or variables. Although these classes suggest possible decidable generalisations beyond the unary setting, we show that deduction is already undecidable for a convergent simultaneous variable-lifting system. This exposes both the potential and the limits of extending the right-divisibility results to richer equational theories.
\end{abstract}

\section{Introduction}

Deduction is central to the formal analysis of cryptographic protocols. In the symbolic, or Dolev--Yao, model \cite{dolev2003security}, messages are represented as abstract terms built from atomic names and public operations available to an attacker, such as pairing, encryption, hashing, and modular exponentiation. Given a collection of observed messages, the intruder deduction problem~\cite{Comon-LundhS03} asks whether the attacker can construct a target term, such as a secret key, nonce, or plaintext. It therefore captures a fundamental reachability question underlying many symbolic verification methods.

Deduction modulo an equational theory is undecidable in general, even for some theories admitting convergent presentations \cite{cortier2007deciding}. Convergent rewrite systems are nevertheless attractive because they provide canonical representatives: two terms are equivalent exactly when their normal forms coincide. Since convergence alone does not ensure decidability of deduction, an important objective is to identify fragments for which deduction is decidable.

This problem is particularly relevant because modern cryptographic protocols employ algebraic operations from many areas of mathematics \cite{cortier2006survey}, including abelian groups, exponentiation, etc. Several decidable classes have therefore been developed, including subterm-convergent theories \cite{cortier2007deciding}, permutative theories \cite{erbatur2024deciding}, contracting theories \cite{bunch2024knowledge}, and related fragments. These results also support various automated protocol-analysis tools \cite{barbosa2021sok}, such as ProVerif \cite{blanchet2016modeling}, DEEPSEC \cite{cheval2018deepsec}, KISS \cite{ciobacua2012computing}, YAPA \cite{baudet2013yapa}, and SPEC \cite{tiu2016spec}.

Most existing decidable classes are motivated by concrete cryptographic primitives. Although this practical focus is essential, it can produce technically intricate conditions whose structural significance is difficult to isolate. It is therefore not always clear which restrictions are intrinsically necessary for decidability or how far their underlying ideas can be generalised.

This paper takes a structurally minimal approach. We restrict attention to unary function symbols, where terms correspond to words, equational theories to semi-Thue systems~\cite{book1993string}, and deduction to a right-divisibility problem. We study the decidability of right-divisibility for several classes of semi-Thue systems, connecting classical string-rewriting problems with symbolic protocol verification. Positive results in this setting suggest principles that may extend to richer theories, while negative results identify boundaries that also apply to more general systems. In this way, minimal deduction problems provide a structural guide to the decidability landscape of symbolic cryptographic reasoning.

{\em \bf Contributions.} 
Motivated by deduction in symbolic cryptographic analysis, this paper studies the right-divisibility problem for semi-Thue systems: given $S=(\Gamma,R)$ and words $u,v\in\Gamma^*$, determine whether there exists $w\in\Gamma^*$ such that $wu\equiv_S v$. When all function symbols are unary, terms can be represented as words, and deduction reduces naturally to this problem. We prove that right-divisibility is decidable for convergent prefix-erasing systems and for convergent suffix-erasing systems, also known as dwindling systems~\cite{Akcam2021}. To the best of our knowledge, these results have not previously been isolated in this form.

These results are part of a broader programme connecting classical problems in string rewriting with the decidability of knowledge problems in symbolic protocol verification. We introduce related classes of term rewriting systems whose rules erase contexts while lifting selected subterms or variables. We conjecture decidability for the ordinary convergent lifting variants, but prove that deduction is undecidable for a convergent simultaneous variable-lifting system. Together, these results illustrate how minimal rewriting structures can help locate the boundary between decidable and undecidable deduction.

{\em \bf Related Works.}
The word problem and the left- and right-divisibility problems are classical decision problems for monoids presented by semi-Thue systems. They have been studied alongside Green's relations, finiteness, the group property, and related algebraic questions
\cite{book1983decidable,book1993string,otto1984some,otto1986two,
otto1991decision,otto1985deciding,otto1997properties,otto1998equational}.
A key objective is to understand how structural restrictions on a semi-Thue system affect the decidability and complexity of these problems.
Table~\ref{tab:word-rdiv-comparison} summarises representative results for the word and right-divisibility problems. 
For convergent systems, the word problem is decidable by comparing normal forms, but right divisibility remains undecidable \cite{book1982confluent}. A significant positive boundary is given by finite confluent monadic systems, for which right divisibility is decidable \cite{book1983decidable}.

This paper seeks decidability results beyond the monadic boundary. Rather than requiring every right-hand side to have length at most one, we study semi-Thue systems with structured erasing rules whose right-hand sides may be longer. This is motivated by subterm-convergent theories, for which deduction is decidable \cite{abadi2006deciding}: over unary signatures, such theories naturally induce prefix-erasing string-rewriting systems. Our approach is also related to earlier work on structured string rewriting for equational matching and unification \cite{otto1998equational}. Semi-Thue systems thus provide a minimal setting in which to isolate the structural sources of decidability.

In the more general setting of term rewriting, deduction is commonly studied together with static equivalence as knowledge problems. Subterm-convergent theories form a prominent decidable class, whereas richer algebraic theories are needed to model operations such as exclusive-or and other cryptographic primitives \cite{cortier2006survey}. Even for convergent theories, however, deduction becomes undecidable once the permitted rewriting structure is sufficiently general \cite{damanik2026term}.

One major line of research therefore seeks broad syntactic classes for which deduction, and often static equivalence, remain decidable. Permutative theories extend the setting by allowing controlled permutations of variables while preserving suitable structural invariants \cite{erbatur2024deciding}. In another direction, graph and and homeomorphic-embedding theories extend subterm-convergent systems by formalizing structural contraction through embedding relations \cite{bunch2024knowledge}. The related notion of contracting term-rewriting systems follows a similar philosophy, describing decreasing rules through comparatively elaborate graph-theoretic and syntactic conditions.

Our approach is complementary. Rather than beginning with a graph-theoretic notion of structural decrease, we start from elementary forms of string rewriting, and then investigate which extensions to term rewriting arise naturally. This bottom-up perspective may help distinguish structural conditions that are genuinely necessary for decidability from conditions introduced mainly to support a particular proof technique. In particular, the existing definition of contracting TRS is technically intricate, and it is not yet clear whether its complexity is essential to obtain decidability or whether simpler structural principles would suffice.

{\em \bf Organization of the paper.}
The rest of the paper is organised as follows. Section~\ref{sec:prelim} recalls some basic concepts in term rewriting and the deduction problems in the context of security protocol verification. In Section~\ref{sec:STS-erasing}, we explain how deduction problems modulo term rewriting systems reduce, in the unary setting, to right-divisibility problems for semi-Thue systems. 
Section~\ref{sec:prefix-erasing} and Section~\ref{sec:suffix-erasing} contain the main technical results of the paper. We prove that the right-divisibility problem is decidable for convergent prefix-erasing semi-Thue systems (Section~\ref{sec:prefix-erasing}) and for convergent suffix-erasing semi-Thue systems (Section~\ref{sec:suffix-erasing}), the latter also being known as dwindling systems. We also discuss the dual left-divisibility problem and explain why analogous decidability results hold for the same classes. 
In Section~\ref{sec:lifting-rules}, we discuss several variants and extensions of the problems studied in Section~\ref{sec:prefix-erasing} and Section~\ref{sec:suffix-erasing}.
Section~\ref{sec:conclusion} concludes the paper and presents directions for further research.

\begin{table*}[!t]
\centering
\footnotesize
\renewcommand{\arraystretch}{1.25}
\setlength{\tabcolsep}{3.5pt}

\newcolumntype{C}[1]{>{\centering\arraybackslash}m{#1}}
\newcolumntype{Y}{>{\centering\arraybackslash}X}

\newcommand{\holds}{%
    \textcolor{green!50!black}{\ensuremath{\checkmark}}}
\newcommand{\fails}{%
    \textcolor{red!75!black}{\ensuremath{\times}}}

\newcommand{\decidable}{%
    \textcolor{green!50!black}{\textbf{Dec}}}
\newcommand{\undecidable}{%
    \textcolor{red!75!black}{\textbf{Undec}}}

\begin{tabularx}{\textwidth}{C{2.65cm} *{7}{Y}}
\toprule
\rowcolor{gray!18}
\textbf{System class}
&
\multicolumn{5}{c}{\textbf{Structural properties}}
&
\makecell{\textbf{Word}\\\textbf{problem}}
&
\makecell{\textbf{RDiv}\\\textbf{problem}}
\\
\cmidrule(lr){2-6}

&
\textbf{Noeth.}
&
\makecell{\textbf{Length-}\\\textbf{reducing}}
&
\textbf{Monadic}
&
\textbf{Special}
&
\textbf{CR}
&
&
\\
\midrule

\makecell{Non-Noeth. \ \cite{jantzen2012confluent}}
& \fails
& \fails
& \fails
& \fails
& \holds
& \undecidable
& \undecidable
\\

\addlinespace[1pt]
\makecell{Len. Red. \ \cite{otto1984some}}
& \holds
& \holds
& \fails
& \fails
& \holds
& \decidable
& \undecidable
\\

\addlinespace[1pt]
\makecell{Mon. \ \cite{book1983decidable}}
& \holds
& \holds
& \holds
& \fails
& \holds
& \decidable
& \decidable
\\

\addlinespace[1pt]
\makecell{Special \ \cite{otto1991decision}}
& \holds
& \holds
& \holds
& \holds
& \holds
& \decidable
& \decidable
\\

\bottomrule
\end{tabularx}

\begin{minipage}{\textwidth}
\vspace{1.5mm}
\scriptsize
\textbf{Abbreviations:}
Noeth.\ = Noetherian;
CR = Church--Rosser.

\textbf{Structural properties:}
\holds\ means that the property is assumed;
\fails\ means that the property is not required.

\textbf{Decision status:}
\decidable\ = decidable;
\undecidable\ = undecidable;
\end{minipage}

\caption{Decidability of the word and right-divisibility problems for
selected classes of semi-Thue systems.}
\label{tab:word-rdiv-comparison}
\end{table*}

\section{Preliminaries}
\label{sec:prelim}

We recall the basic notions of term rewriting, string rewriting, frames and (intruder) deduction. For further details, we refer to \cite{baader1998term,book1993string,abadi2006deciding}.

\subsection{Term Rewriting}
Let $\mathcal{F}$ be a signature in which each symbol $f\in\mathcal{F}$ has an arity $\operatorname{ar}(f)\in\mathbb{N}$. Let $\mathcal{N}$ and $\mathcal{X}$ be disjoint countably infinite sets of names and variables, respectively, where names are treated as constants. We write $\mathcal{T}(\mathcal{F},\mathcal{N}\cup\mathcal{X})$ for the set of terms and $\mathcal{T}(\mathcal{F},\mathcal{N})$ for the set of ground terms. For a term $t$, the sets of variables, names, and subterms occurring in $t$ are denoted by $\operatorname{Var}(t)$, $\operatorname{Name}(t)$, and $\operatorname{Sub}(t)$, respectively.

A context $C[\circ_1,\ldots,\circ_m]$ is a term containing distinguished holes. Given terms $t_1,\ldots,t_m$, we write $C[t_1,\ldots,t_m]$ for the term obtained by replacing each hole $\circ_i$ by $t_i$. A substitution is a finitely supported mapping $\sigma\colon\mathcal{X}\to\mathcal{T}(\mathcal{F},\mathcal{N}\cup\mathcal{X})$, extended homomorphically to all terms. Its domain is $\operatorname{dom}(\sigma)=\{x\in\mathcal{X}\mid x\sigma\neq x\}$. A substitution is ground if $x\sigma$ is ground for every $x\in\operatorname{dom}(\sigma)$.

An equational theory $E$ is a set of equations between terms. The congruence generated by $E$ is denoted by $\equiv_E$; it is the smallest equivalence relation containing every substitution instance of every equation in $E$ and closed under contexts.

A term rewrite system (TRS) $R$ is a set of rules $\ell\to r$ such that $\ell\notin\mathcal{X}$ and $\operatorname{Var}(r)\subseteq\operatorname{Var}(\ell)$. The one-step rewrite relation induced by $R$ is defined by $C[\ell\sigma]\to_R C[r\sigma]$ for every rule $\ell\to r\in R$, context $C[\circ]$, and substitution $\sigma$. We denote the reflexive and transitive closure of $\to_R$ by $\to_R^*$ and its reflexive, symmetric, and transitive closure by $\xleftrightarrow{*}_R$.

A term $t$ is irreducible if there is no $u$ such that $t\to_R u$. If $t\to_R^*u$ and $u$ is irreducible, then $u$ is a normal form of $t$. A rewrite system is terminating if it admits no infinite rewrite sequence, and confluent if, whenever $t\to_R^*u$ and $t\to_R^*v$, there exists $w$ such that $u\to_R^*w$ and $v\to_R^*w$. It is convergent if it is both terminating and confluent. In this case, every term $t$ has a unique normal form, denoted by $t^\downarrow$, and $s\xleftrightarrow{*}_R t$ if and only if $s^\downarrow=t^\downarrow$.

A TRS $R$ presents an equational theory $E$ if $s\equiv_E t$ holds exactly when $s\xleftrightarrow{*}_R t$. Hence, if $R$ is convergent and effectively presented, equality modulo $E$ can be decided by comparing normal forms.

\subsection{String Rewriting}

Let $\Gamma$ be a finite alphabet. The set $\Gamma^*$ of finite words over $\Gamma$, equipped with concatenation and the empty word $\varepsilon$, is the free monoid generated by $\Gamma$. The length of a word $u$ is denoted by $|u|$. A word $v$ is a prefix of $u$ if $u=vw$, a suffix if $u=wv$, and a factor if $u=w_1vw_2$, for some $w,w_1,w_2\in\Gamma^*$. It is a scattered subword, or subsequence, of $u$ if it can be obtained from $u$ by deleting zero or more letters without changing the order of the remaining letters.

A semi-Thue system (STS), also called a string rewriting system, is a pair $S=(\Gamma,R)$, where $R\subseteq\Gamma^*\times\Gamma^*$. A rule $(\ell,r)\in R$ is written as $\ell\to r$. The induced one-step rewrite relation is given by $p\ell q\to_S prq$ for all $p,q\in\Gamma^*$ and $\ell\to r\in R$. Thus, an STS is the word-rewriting analogue of a TRS.
We write $\operatorname{IRR}(S)$ for the set of irreducible words.

The congruence generated by $S$ is defined by $u\equiv_S v$ if and only if $u\xleftrightarrow{*}_S v$. The quotient $M_S=\Gamma^*/{\equiv_S}$ is the monoid presented by $S$, with multiplication $[u]_S[v]_S=[uv]_S$. Equivalently, $M_S$ is generated by $\Gamma$ subject to the relations $\ell=r$ for every rule $\ell\to r\in R$. If $S$ is convergent, then $u\equiv_S v$ if and only if $u^\downarrow=v^\downarrow$.

A rule $\ell\to r$ is length-reducing if $|\ell|>|r|$, monadic if $|r|\leq 1$, and special if $r=\varepsilon$. An STS has one of these properties if every rule in it does. Every length-reducing STS is terminating.

\subsection{Frames and Deduction}

A \emph{frame} represents the messages available to an attacker. It is written as $\varphi=\nu\widetilde{n}.\sigma$, where $\widetilde{n}\subseteq\mathcal{N}$ is a finite set of restricted names and $\sigma$ is a finite ground substitution. We define $\operatorname{bn}(\varphi)=\widetilde{n}$, $\operatorname{dom}(\varphi)=\operatorname{dom}(\sigma)$, and $\operatorname{ran}(\varphi)=\{x\sigma\mid x\in\operatorname{dom}(\sigma)\}$.
For example, $\varphi=\nu k.\{x\mapsto\operatorname{enc}(m,k),y\mapsto h(k)\}$ has restricted name $k$, domain $\{x,y\}$, and range $\{\operatorname{enc}(m,k),h(k)\}$.

A \emph{recipe} for a frame $\varphi=\nu\widetilde{n}.\sigma$ is a term $M\in\mathcal{T}(\mathcal{F},(\mathcal{N}\setminus\widetilde{n})\cup\operatorname{dom}(\varphi))$. Thus, a recipe may use function symbols, unrestricted names, and handles from the frame, but cannot directly use restricted names. Its evaluation is the ground term $M\sigma$.
A ground term $t$ is \emph{deducible from $\varphi$ modulo $E$}, written $\varphi\vdash_E t$, if there exists a recipe $M$ such that $M\sigma\equiv_E t$. We may write $M:\varphi\vdash_E t$ when $M$ witnesses this deduction.

\begin{problemstatement}[Deduction Problem]
Given an equational theory $E$, a frame $\varphi=\nu\widetilde{n}.\sigma$, and a ground term $t$, decide whether $\varphi\vdash_E t$.
\end{problemstatement}

For example, consider $E=\{\operatorname{dec}(\operatorname{enc}(x,y),y)=x\}$ and the frame
$\varphi_1=\nu m_1,m_2,k_1,k_2.$ $\{x_1\mapsto\operatorname{enc}(m_1,k_1),x_2\mapsto\operatorname{enc}(m_2,k_2),x_3\mapsto k_1\}$.
Although $k_1$ is restricted, it is deducible using the handle $x_3$, since $x_3\sigma=k_1$. Moreover, $\operatorname{dec}(x_1,x_3)$ is a recipe for $m_1$, because $\operatorname{dec}(x_1,x_3)\sigma\equiv_E m_1$. Hence, $\varphi_1\vdash_E k_1$ and $\varphi_1\vdash_E m_1$. By contrast, $\varphi_1\not\vdash_E k_2$ and $\varphi_1\not\vdash_E m_2$.

\subsection{Right Divisibility Problem}

Suppose that $\Sigma$ consists of finitely many unary function symbols. We identify each function symbol in $\Sigma$ with a letter in a finite alphabet $\Gamma$. For each word $w=w_1\cdots w_n\in\Gamma^*$, let $\overline{w}(t)=w_1(w_2(\cdots w_n(t)\cdots))$, with $\overline{\varepsilon}(t)=t$. Every term over $\Sigma$ is therefore of the form $\overline{w}(a)$ or $\overline{w}(x)$ for some $w\in\Gamma^*$, name $a$, or variable $x$.

Given a frame $\varphi$, every recipe is correspondingly of the form $\overline{w}(a)$ for some public name $a$, or $\overline{w}(x)$ for some $x\in\operatorname{dom}(\varphi)$. In the latter case, the recipe can use only one variable from the frame. Suppose that $x\varphi=\overline{u}(d)$ and that the target term is $t=\overline{v}(c)$. Applying the recipe $r=\overline{w}(x)$ gives $r\varphi=\overline{w}(\overline{u}(d))=\overline{wu}(d)$. Hence, $r$ deduces $t$ precisely when $\overline{wu}(d)\equiv_E\overline{v}(c)$.

In particular, when $c=d$, this condition reduces to $wu\equiv_S v$, where $S$ is the semi-Thue system corresponding to the unary equational theory. Thus, deduction in this setting leads naturally to the following right-divisibility problem.

\begin{problemstatement}[Right-Divisibility]
Given a semi-Thue system $S=(\Gamma,R)$ and words $u,v\in\Gamma^*$, decide whether there exists $w\in\Gamma^*$ such that $wu\equiv_S v$.
\end{problemstatement}

The right-divisibility problem is a classical decision problem for semi-Thue systems and finitely presented monoids. It is closely related to the word problem, which asks whether $u\equiv_S v$ for given words $u,v\in\Gamma^*$ \cite{book1993string}, as well as to Green's relations and other divisibility problems \cite{book1983decidable}.

Its dual, the left-divisibility problem, asks whether there exists $w\in\Gamma^*$ such that $uw\equiv_S v$. For unary signatures, left-divisibility can be viewed as a restricted form of equational matching. Its study is also relevant to problems such as simultaneous unification and second-order equational unification \cite{otto1998equational}.

\section{STS with Structured Erasing Rules}
\label{sec:STS-erasing}

In this section, we define formally the classes of semi-Thue systems of interest. These classes are defined via rewrite rules that generalise subterm rules from term rewriting.
Recall that a subterm rule has the form \(C[t]\to t\): it removes the surrounding context \(C[\cdot]\) and
keeps the subterm \(t\). In the unary setting, terms can be identified with
words, where the outer context appears as a prefix and the inner subterm
appears as the remaining suffix  
Thus, a unary subterm rule corresponds to a
string rule which deletes a nonempty prefix and preserves the remaining
suffix. 
\begin{definition}
An STS \(S=(\Gamma,R)\) is called \emph{prefix-erasing} if, for all
\((u,v)\in R\), there exist \(x,y\in \Gamma^*\) such that \(|x|>0\), \(u=xy\), and
\(v=y\).
\end{definition}
Every prefix-erasing STS is terminating because it is length-reducing. However, it is not necessarily confluent.
\begin{example}\label{ex:prefix-erasing}
Let $\Gamma=\{a,b\}$ and $\Gamma'=\{a,b,c\}$. The following are examples of prefix-erasing STSs. 
\begin{enumerate}
\item Let $S_1=(\Gamma',R_1)$, where
$R_1=\{(bba, ba),\ (ccb, cb),\ (ac, c),$ $(bc, c),\ (ccc, cc)\}$.
This is a non-monadic prefix-erasing STS. It can be checked that $S_1$ is convergent.

\item Let $S_2=(\Gamma,R_2)$, where
$R_2=\{(bab, ab), (aa, \varepsilon), (bb, b)\}$.
This is also a non-monadic prefix-erasing STS. It can be checked that $S_2$ is convergent.

\item Let $S_3=(\Gamma,R_3)$, where
$R_3=\{(ab,b),\ (ba,a)\}$.
This system is prefix-erasing and terminating, but not confluent. Indeed,
$aba\to ba\to a$, while $aba\to aa$.
Since $a$ and $aa$ are distinct irreducible words, $S_3$ is not convergent.
\end{enumerate}

\end{example}

A natural dual to the prefix-erasing rules are the suffix erasing rules, e.g., a rule such as \(xy\to x\) that removes a nonempty suffix while preserving the prefix. Although
this is not a subterm rule in the usual sense, it is closely related to the
same intuition of context deletion, now retaining the outer context rather
than the inner component. 

\begin{definition}
An STS \(S=(\Gamma,R)\) is called \emph{suffix-erasing} if, for all
\((u,v)\in R\), there exist \(x,y\in \Gamma^*\) such that \(|y|>0\), \(u=xy\) and
\(v=x\).
\end{definition}

\begin{example}[Suffix-erasing]
\label{ex:suffix-erasing}
The following are the dual to the prefix-erasing systems in Example~\ref{ex:prefix-erasing}.
This reversal operation preserves termination and confluence (see, e.g., \cite{book1993string}). Hence we obtain, for example, the following convergent suffix-erasing STS from Example~\ref{ex:prefix-erasing} $S_1^R=(\Gamma',R_1^R)$, where
\(
R_1^R=\{(abb, ab),\ (bcc, bc),\ (ca, c),$ $\ (cb, c),\ (ccc, cc)\}.
\).
\end{example}

Another interesting erasing rule that might be considered is factor-erasing STS as follows.

\begin{definition}
An STS $S=(\Gamma,R)$ is called \emph{factor-erasing} if for every rule $(u,v)\in R$, there exist $x,y,z\in \Gamma^*$ with $|y|>0$ such that $u=xyz$ and $v=xz$.
\end{definition}

We conjecture that right divisibility problem is undecidable for factor erasing STS.

\section{Right divisibility problem is linear-time decidable for finite convergent prefix-erasing STS}
\label{sec:prefix-erasing}

In this section, we extend the known decidability results for right-divisibility in semi-Thue systems. Beyond convergent monadic systems, we show decidability for convergent prefix-erasing systems, whose right-hand sides may have arbitrary length.

Under the correspondence between words and unary terms, prefix-erasing rules are precisely subterm rules, and right-divisibility corresponds to deduction. Thus, this result follows from the decidability of deduction for convergent subterm theories \cite{abadi2006deciding}. Our contribution here is to make this connection explicit. In the next section, we will establishe a genuinely new result for convergent suffix-erasing systems.

    Our main tool is the set of suffix multiples.

\begin{definition}[Suffix multiples]
Let $S=(\Gamma,R)$ be an STS, and let $u \in \operatorname{IRR}(S)$. The set of \emph{$S$-suffix multiples} of $u$ is defined by $\operatorname{SufMul}_S(u)=\{v \in \operatorname{IRR}(S) : \exists p \in \Gamma^* \text{ such that } u=pv \text{ and } u \preceq_S^r v\}$.
\end{definition}
\begin{remark}
The set \(\operatorname{SufMul}_S(u)\) can be viewed as a word-based analogue of the saturated set $\operatorname{sat}(\varphi)$ used for subterm-convergent theories \cite{abadi2006deciding}. 
\end{remark}

We record some basic properties of $\operatorname{SufMul}_S(u)$. By definition, $u \in \operatorname{SufMul}_S(u)$, so the set is nonempty. Moreover, every $v \in \operatorname{SufMul}_S(u)$ is a suffix of $u$. Hence $\operatorname{SufMul}_S(u)$ is finite, with at most $|u|+1$ elements. Therefore, $\operatorname{SufMul}_S(u)$ has a unique shortest element, denoted by $\operatorname{MinSufMul}_S(u)$, or simply $u_{\min}$ when the context is clear. This element will be used to characterize right-divisibility by $u$.
Indeed, if
$u=u_1\cdots u_k u_{\min}$, where $u_1,\ldots,u_k \in \Gamma$, then
\(
\operatorname{SufMul}_S(u) =
\{u_j\cdots u_k u_{\min} : 1\leq j\leq k+1\},
\)
where $u_{k+1}\cdots u_k=\varepsilon$. Thus, $\operatorname{SufMul}_S(u)$ is completely determined by $u_{\min}$.

\begin{lemma}[Minimal suffix generator]\label{lem:minimal-suffix-generator}
Let \(S=(\Gamma,R)\) be a convergent prefix-erasing STS, and let \(u \in \operatorname{IRR}(S)\). Let \(u_{\min}\) be the unique element of \(\operatorname{SufMul}_S(u)\) of minimum length. Then, for every \(v \in \operatorname{IRR}(S)\), we have
$
u \preceq_S^r v
\quad \text{iff} \quad
u_{\min} \text{ is a suffix of } v.
$
\end{lemma}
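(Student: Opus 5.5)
The plan is to read $u \preceq_S^r v$ as ``there exists $w\in\Gamma^*$ with $wu\equiv_S v$'' and to prove the two directions separately. The ``if'' direction is a transitivity argument. The ``only if'' direction rests on an invariant preserved along a reduction of $wu$ to normal form: every word in the reduction splits as a prefix followed by an element of $\operatorname{SufMul}_S(u)$.

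\emph{($\Leftarrow$)} Suppose $v=p\,u_{\min}$. Since $u_{\min}\in\operatorname{SufMul}_S(u)$, there is $q$ with $qu\equiv_S u_{\min}$. Because $\equiv_S$ is a congruence, $pqu\equiv_S p\,u_{\min}=v$, so $w=pq$ witnesses $u\preceq_S^r v$.

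\emph{($\Rightarrow$)} Suppose $wu\equiv_S v$ with $v\in\operatorname{IRR}(S)$. By convergence, every maximal reduction $wu=z_0\to_S z_1\to_S\cdots\to_S z_n$ ends in $(wu)^\downarrow=v$; fix one such reduction. I will prove by induction on $i$ that $z_i=p_i s_i$ for some $p_i\in\Gamma^*$ and some $s_i\in\operatorname{SufMul}_S(u)$.
\begin{itemize}
\item \emph{Base case.} Take $p_0=w$ and $s_0=u$.
\item \emph{Inductive step.} Let $z_i=\alpha\,xy\,\beta\to_S\alpha\,y\,\beta$ via a rule $xy\to y$ with $|x|>0$. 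The word $s_i$ is a suffix of $u$, and $u$ is irreducible, so $s_i$ is irreducible. Hence the redex $xy$ cannot lie inside $s_i$, so it must begin inside $p_i$.
\item \emph{Case 1: the erased part $x$ lies inside $p_i$.} Then the step only shortens the prefix, and $z_{i+1}=p_{i+1}s_i$ with $s_{i+1}=s_i$.
\item \emph{Case 2: $x$ extends into $s_i$.} Write $x=x_1x_2$ with $x_1$ a suffix of $p_i$ and $s_i=x_2y\beta$. Set $s_{i+1}=y\beta$, a proper suffix of $s_i$ and hence of $u$; it is irreducible.
\end{itemize}
In Case 2 the key point is that $u$ still right-divides $s_{i+1}$. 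Choose $q$ with $qu\equiv_S s_i$. The same rule applies to $x_1 s_i$, giving $x_1 s_i=x_1x_2y\beta\to_S y\beta$. Therefore $x_1qu\equiv_S x_1s_i\equiv_S s_{i+1}$, so $s_{i+1}\in\operatorname{SufMul}_S(u)$.

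At the end of the reduction, $v=z_n=p_n s_n$ with $s_n\in\operatorname{SufMul}_S(u)$. Both $s_n$ and $u_{\min}$ are suffixes of $u$, and $|s_n|\ge|u_{\min}|$ by minimality of $u_{\min}$. So $u_{\min}$ is a suffix of $s_n$, hence a suffix of $v$.

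The main obstacle is the inductive step, specifically the case analysis on where the redex sits relative to the boundary between $p_i$ and $s_i$. Two facts make it work. First, irreducibility of $u$ passes to all of its suffixes, which forces every redex to start in the prefix. Second, prefix-erasing rules retain a suffix of the left-hand side, so the new tail $y\beta$ is obtained from $x_1 s_i$ by the very same rule. This is precisely where the prefix-erasing hypothesis is essential. Convergence enters only in identifying the endpoint $z_n$ with $v$.
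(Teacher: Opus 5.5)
Your proof is correct and follows the paper's route. The converse uses congruence, and the forward direction tracks a suffix of $u$ that stays in $\operatorname{SufMul}_S(u)$ along the reduction of $wu$ to $v$, then applies minimality of $u_{\min}$ among the linearly ordered suffixes of $u$. Your explicit induction, with the boundary case analysis and the observation that $x_1 s_i \to_S y\beta$ by the same rule, makes rigorous the invariant that the paper only asserts informally.
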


We want to compute the element $u_{\operatorname{min}}$ of $\operatorname{SufMul}_S(u)$ with smallest length.

\begin{lemma}\label{lem:onealphabetremoval}
Let $u=cu_1\in \operatorname{IRR}(S)$, where $c\in\Gamma$ and
$u_1\in\Gamma^*$. Then $cu_1\preceq_S^r u_1$ if and only if there exist a rule $xy\to y$ in $R$ and words
$w_2\in\Gamma^+$ and $z,t\in\Gamma^*$ such that $x=w_2cz$ and $u_1=zyt$.
Consequently, the condition $cu_1\preceq_S^r u_1$ is decidable.
\end{lemma}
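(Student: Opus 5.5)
The proof goes through the two directions separately. For the forward direction I track the letters of the suffix $cu_1$ through a rewrite derivation; the backward direction is a one-line witness, with the single subtlety that the multiplier must carry a copy of $z$. Throughout, $u\preceq_S^r v$ means that there is $w\in\Gamma^*$ with $wu\equiv_S v$. Since $cu_1\in\operatorname{IRR}(S)$, its suffix $u_1$ is irreducible as well.

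\emph{($\Leftarrow$).} Suppose $xy\to y\in R$ with $x=w_2cz$, $w_2\in\Gamma^+$, and $u_1=zyt$. I take the multiplier $w=zw_2$ rather than $w_2$. Then
\[
zw_2\,cu_1 \;=\; z\,w_2cz\,y\,t \;=\; z\,(xy)\,t \;\to_S\; z\,y\,t \;=\; u_1 .
\]
Hence $wcu_1\equiv_S u_1$, that is, $cu_1\preceq_S^r u_1$. With $w=w_2$ alone one only reaches $yt$, which differs from $u_1$ when $z\neq\varepsilon$. Prefixing $z$ restores exactly the part of $u_1$ that the rule consumes, and this handles arbitrary $z$.

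\emph{($\Rightarrow$).} Let $wcu_1\equiv_S u_1$. By convergence and irreducibility of $u_1$, we have $wcu_1\to_S^* u_1$. Fix such a derivation and track the individual letter occurrences of the suffix $cu_1$. A prefix-erasing step $p\,xy\,q\to p\,y\,q$ deletes a contiguous block (the occurrence of $x$) and keeps all other occurrences in order. If no occurrence from $cu_1$ were ever deleted, the final word would have length at least $|cu_1|=|u_1|+1>|u_1|$, a contradiction. So consider the first step that deletes an occurrence from the tracked suffix. Just before it, the word has the form $w'cu_1$, with $w'$ the survivors of $w$, because no tracked letter has yet been deleted.

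The redex $xy$ of this step is not a factor of $cu_1$, since $cu_1$ is irreducible; hence it begins inside $w'$. Its erased part $x$ is a contiguous block starting in $w'$ and containing at least one tracked letter, so it must cross the boundary and contain $c$. Therefore $x=w_2cz$, where $w_2$ is a nonempty suffix of $w'$ and $z$ is a prefix of $u_1$. The remainder $y$ of the redex immediately follows $x$ within $u_1$, so $u_1=zyt$ for some $t$. This is precisely the required rule and decomposition.

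\emph{Decidability.} Given $c$ and $u_1$, there are finitely many rules in $R$ (assuming $R$ is finite, as in the section title), and for each rule $xy\to y$ finitely many ways to split $x=w_2cz$. For each split, checking whether $zy$ is a prefix of $u_1$ is a direct string comparison. The step I expect to need the most care is the tracking argument in ($\Rightarrow$): choosing the \emph{first} step that deletes a tracked letter is what guarantees that the word still has the shape $w'cu_1$. Irreducibility of $cu_1$ then forces the redex to straddle the boundary with $c$ in the erased part.
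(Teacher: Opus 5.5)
Your proof is correct. The backward direction (multiplier $zw_2$) and the decidability argument coincide with the paper's, but your forward direction uses a different device.

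The paper fixes a \emph{shortest} witness $w$. It uses the length-reducing property to conclude that $w$ is irreducible, so the first redex of $wcu_1$ must straddle the boundary. It then rules out the case where $c$ lies in the retained part $y$: erasing the occurrence of $x$ inside $w_2=xs$ would give the shorter witness $w_1s$.

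You instead take an arbitrary witness and an arbitrary derivation, and view each prefix-erasing step as deleting the contiguous block $x$. The count $|u_1|<|cu_1|$ guarantees that some letter of $cu_1$ is eventually erased, and you pick the first such step. At that moment the word is still $w'cu_1$, so irreducibility of $cu_1$ forces the redex to start in $w'$. Since the erased block contains a tracked letter, it must contain $c$. The case $c\in y$ never has to be excluded separately, because steps erasing only letters of $w'$ are simply skipped.

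The paper's route yields a bit more structural information: in a shortest witness, $w$ is irreducible and the very first rewrite step is already the required boundary-crossing step. Your route needs no minimality argument and no irreducibility of $w$. It also makes rigorous the letter-tracking picture of prefix-erasing derivations that the paper invokes only informally in the forward direction of Lemma~\ref{lem:minimal-suffix-generator}, so it could be reused there.
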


\begin{algorithm}[!t] \caption{Testing whether $cu_1\preceq_S^r u_1$} \label{alg:alphabetrightdivtest} \begin{algorithmic}[1] \Require Irreducible $cu_1$, with $c\in\Gamma$ \Ensure Whether $cu_1\preceq_S^r u_1$ \ForAll{$xy\to y\in R$ and factorisations $x=w_2cz$ with $w_2\in\Gamma^+$} \State \Return \textsc{Yes} \textbf{if} $zy$ is a prefix of $u_1$ \EndFor \State \Return \textsc{No} \end{algorithmic} \end{algorithm}

The preceding characterization yields the following decision procedure for testing whether $cu_1\preceq_S^r u_1$.

Algorithm~\ref{alg:alphabetrightdivtest} is effective because $R$ is finite and every left-hand side $x$ has only finitely many factorisations.

\begin{lemma}\label{lem:recursiveminsufmul}
For every $u\in \operatorname{IRR}(S)$, the word $u_{\min}$ is computable. More precisely,
$
\operatorname{MinSufMul}_S(\varepsilon)=\varepsilon,
$
and, for $u=cu_1$ with $c\in\Gamma$,
\[
\operatorname{MinSufMul}_S(cu_1)=
\begin{cases}
\operatorname{MinSufMul}_S(u_1),
    & \text{if } cu_1\preceq_S^r u_1,\\[1mm]
cu_1,
    & \text{otherwise.}
\end{cases}
\]
\end{lemma}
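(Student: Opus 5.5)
Here $u\preceq_S^r v$ means that there is $w$ with $wu\equiv_S v$. We argue by induction on $|u|$.

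\emph{Base case.} For $u=\varepsilon$, every element of $\operatorname{SufMul}_S(\varepsilon)$ is a suffix of $\varepsilon$. Hence $\operatorname{MinSufMul}_S(\varepsilon)=\varepsilon$.

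\emph{Inductive step.} Let $u=cu_1\in\operatorname{IRR}(S)$. Then $u_1$ is irreducible, being a factor of an irreducible word, so $\operatorname{MinSufMul}_S(u_1)$ is defined.

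The key structural fact is transitivity of $\preceq_S^r$: if $w_1u\equiv_S v$ and $w_2v\equiv_S v'$, then $w_2w_1u\equiv_S v'$, since $\equiv_S$ is a congruence. We also use the suffix property: every suffix $v$ of $u$ satisfies $v\preceq_S^r u$, witnessed by $w=p$ where $u=pv$. It follows that for a suffix $v$ of $u$, we have $v\in\operatorname{SufMul}_S(u)$ if and only if $u$ and $v$ right-divide each other.

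\emph{Case 1: $cu_1\preceq_S^r u_1$.} We first show $\operatorname{SufMul}_S(cu_1)=\{cu_1\}\cup\operatorname{SufMul}_S(u_1)$.

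If $v$ is a proper suffix of $cu_1$, then $v$ is a suffix of $u_1$, and $u_1\preceq_S^r cu_1$ holds because $u_1$ is a suffix of $cu_1$. Combining this with the case hypothesis $cu_1\preceq_S^r u_1$ and transitivity gives
\[
cu_1\preceq_S^r v \iff u_1\preceq_S^r v .
\]
So the proper suffixes of $cu_1$ lying in $\operatorname{SufMul}_S(cu_1)$ are exactly the elements of $\operatorname{SufMul}_S(u_1)$. Since $u_1$ is nonempty-shorter than $cu_1$ and $u_1\in\operatorname{SufMul}_S(u_1)$, the shortest element of $\operatorname{SufMul}_S(cu_1)$ is $\operatorname{MinSufMul}_S(u_1)$.

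\emph{Case 2: $cu_1\not\preceq_S^r u_1$.} Let $v$ be any proper suffix of $cu_1$, so $v$ is a suffix of $u_1$. Then $u_1=qv$ for some $q$, hence $v\preceq_S^r u_1$. If $cu_1\preceq_S^r v$ held, transitivity would give $cu_1\preceq_S^r u_1$, a contradiction. So no proper suffix lies in $\operatorname{SufMul}_S(cu_1)$, and $\operatorname{SufMul}_S(cu_1)=\{cu_1\}$, whose minimum is $cu_1$.

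\emph{Computability.} The case test $cu_1\preceq_S^r u_1$ is decidable by Lemma~\ref{lem:onealphabetremoval}, implemented by Algorithm~\ref{alg:alphabetrightdivtest}. Therefore the recursion on $|u|$ terminates and computes $u_{\min}$.

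Convergence is not needed in this argument beyond its use in Lemma~\ref{lem:onealphabetremoval}; the rest relies only on congruence and transitivity. The step requiring the most care is the first half of Case 1: proving that elements of $\operatorname{SufMul}_S(u_1)$ are multiples of $cu_1$. That is exactly where the hypothesis $cu_1\preceq_S^r u_1$ is combined with transitivity.
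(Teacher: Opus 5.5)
Your proof is correct and follows the same route as the paper. In the negative case, any proper suffix $v$ satisfies $v\preceq_S^r u_1$, so $cu_1\preceq_S^r v$ would force $cu_1\preceq_S^r u_1$; in the positive case, mutual right-divisibility of $cu_1$ and $u_1$ identifies the proper suffixes in $\operatorname{SufMul}_S(cu_1)$ with $\operatorname{SufMul}_S(u_1)$, and decidability of the case test comes from Lemma~\ref{lem:onealphabetremoval} (your remark that convergence is used only through that lemma also matches the paper's argument).
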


\begin{algorithm}[!t]
\caption{Computing $\operatorname{MinSufMul}_S(u)$}
\label{alg:minsufmul}
\begin{algorithmic}[1]
\Require Irreducible $u\in\Gamma^*$
\Ensure The shortest word in $\operatorname{SufMul}_S(u)$
\If{$u=\varepsilon$}
    \State \Return $\varepsilon$
\EndIf
\State Write $u=cu_1$, with $c\in\Gamma$
\If{Algorithm~\ref{alg:alphabetrightdivtest} returns \textsc{Yes} on $cu_1$}
    \State \Return $\operatorname{MinSufMul}_S(u_1)$
\EndIf
\State \Return $u$
\end{algorithmic}
\end{algorithm}

We are now ready to establish the main result of this section. Although it follows as a special case of the general decidability result of \cite{abadi2006deciding}, it is worth stating and proving separately. From the perspective of string rewriting, the result gives a natural extension of the classical decidability of right divisibility for monadic semi-Thue systems.

Our specialized treatment also yields a sharper algorithmic result than the general procedure. Every prefix-erasing semi-Thue system is length-reducing, and normal forms with respect to a fixed finite convergent length-reducing system can be computed in linear time \cite{book1982confluent}. Once the normal forms of the two input words have been obtained, our right-divisibility test also requires only linear time. Consequently, the complete procedure runs in time linear in the combined length of the input words.

\begin{theorem}[Linear-Time Right Divisibility for Prefix-Erasing Systems]
\label{thm:rdiv-prefix-erasing}
Let $S$ be a fixed finite convergent prefix-erasing semi-Thue system. The right-divisibility problem for $S$ is decidable in time $O(|u|+|v|)$, where $u$ and $v$ are the input words.
\end{theorem}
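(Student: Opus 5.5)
The plan is to reduce the problem to a suffix test on normal forms and then check that each step runs in linear time. Given input words $u,v\in\Gamma^*$, first compute the normal forms $u^\downarrow$ and $v^\downarrow$. Since $S$ is fixed, finite, convergent and length-reducing, this takes time $O(|u|+|v|)$ by \cite{book1982confluent}. Convergence gives $wu\equiv_S v$ if and only if $wu^\downarrow\equiv_S v^\downarrow$, so $u\preceq_S^r v$ holds exactly when $u^\downarrow\preceq_S^r v^\downarrow$. We may therefore assume that both words are irreducible. By Lemma~\ref{lem:minimal-suffix-generator}, right-divisibility then holds if and only if $u_{\min}=\operatorname{MinSufMul}_S(u^\downarrow)$ is a suffix of $v^\downarrow$. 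Once $u_{\min}$ is known, this suffix comparison costs $O(|u_{\min}|+|v^\downarrow|)=O(|u|+|v|)$.

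The remaining work is to compute $u_{\min}$ in linear time using Lemma~\ref{lem:recursiveminsufmul} and Algorithms~\ref{alg:alphabetrightdivtest} and~\ref{alg:minsufmul}. Write $u^\downarrow=c_1c_2\cdots c_n$. The recursion in Algorithm~\ref{alg:minsufmul} strips letters from the left. At step $j$ it tests whether $c_j\cdots c_n\preceq_S^r c_{j+1}\cdots c_n$, and it stops at the first $j$ where the test fails, returning $c_j\cdots c_n$. If every test succeeds, it returns $\varepsilon$. Each suffix $c_j\cdots c_n$ is irreducible, since it is a factor of an irreducible word, so Lemma~\ref{lem:onealphabetremoval} applies at every step. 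The recursion therefore makes at most $n$ calls. Implement it iteratively with an index pointer into $u^\downarrow$, so that no suffixes are copied.

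Each individual test in Algorithm~\ref{alg:alphabetrightdivtest} takes constant time, and this is the key point to verify. The loop ranges over rules $xy\to y\in R$ and factorisations $x=w_2c_jz$. Because $R$ is fixed and finite, the number of such triples is bounded by a constant $K_S$ depending only on $S$. For each triple, deciding whether $zy$ is a prefix of $c_{j+1}\cdots c_n$ compares at most $|zy|\le L_S$ letters, where $L_S$ is the maximal left-hand-side length in $R$. Every test therefore costs $O(K_SL_S)=O(1)$, and computing $u_{\min}$ costs $O(n)=O(|u|)$ in total.

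The main obstacle is the correctness side, which the proof simply invokes: Lemmas~\ref{lem:minimal-suffix-generator}, \ref{lem:onealphabetremoval} and~\ref{lem:recursiveminsufmul} must really hold in the form used. In the complexity argument the only care needed is to avoid quadratic costs. Suffixes must be handled by pointers rather than copied, and normal forms must come from the linear-time algorithm for length-reducing systems rather than naive repeated rewriting. Adding the three phases (normalisation, computation of $u_{\min}$, and the suffix comparison) gives total time $O(|u|+|v|)$, with constants depending only on $S$.
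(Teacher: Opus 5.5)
Your proposal is correct and follows the paper's proof essentially step for step. Like the paper, you normalise both words in linear time using the result of \cite{book1982confluent} for length-reducing convergent systems, reduce the question to the suffix test of Lemma~\ref{lem:minimal-suffix-generator}, and compute $u_{\min}$ by the left-to-right recursion of Lemma~\ref{lem:recursiveminsufmul} with constant work per suffix position; your explicit $O(K_S L_S)$ per-test bound and the index-pointer implementation simply spell out what the paper defers to Appendix~\ref{app:rdiv-complexity}.
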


\begin{proof}
Let $u,v\in\Gamma^*$. Since $S$ is convergent, the words $u$ and $v$ have unique normal forms $u^{\downarrow}$ and $v^{\downarrow}$, and $u\preceq_S^r v$ if and only if $u^{\downarrow}\preceq_S^r v^{\downarrow}$.

By Lemma~\ref{lem:recursiveminsufmul}, we can compute $(u^{\downarrow})_{\min}=\operatorname{MinSufMul}_S(u^{\downarrow})$. Lemma~\ref{lem:minimal-suffix-generator} then gives
\[
u^{\downarrow}\preceq_S^r v^{\downarrow}
\quad\Longleftrightarrow\quad
(u^{\downarrow})_{\min}\text{ is a suffix of }v^{\downarrow}.
\]
Hence right-divisibility is decidable by normalizing $u$ and $v$, computing $(u^{\downarrow})_{\min}$, and performing one suffix test.

Because $S$ is fixed and prefix-erasing, it is length-reducing, and normal forms can be computed in linear time using a precompiled rule matcher and a stack-based reduction procedure. The computation of $(u^{\downarrow})_{\min}$ examines each suffix position at most once, with constant work per position depending only on $S$, and the final suffix test is linear. Thus, the complete procedure runs in time $O(|u|+|v|)$. A detailed analysis is given in Appendix~\ref{app:rdiv-complexity}.
\end{proof}

\begin{example}
Consider the STS $S_2=(\{a,b\},\{(bab,ab),\ (aa,\varepsilon),\ (bb, b)\})$ from Example~\ref{ex:prefix-erasing}.

We first determine whether $aba\preceq^r_{S_2}b$. Since both $aba$ and $b$ are irreducible, it suffices to compute $\operatorname{MinSufMul}_{S_2}(aba)$ and check whether it is a suffix of $b$.

Using rule $aa \to \varepsilon$, we write $x=aa$ and $y=\varepsilon$. 
Then $x=w_2az$ with $w_2=a$ and $z=\varepsilon$. Moreover, the condition $\varepsilon=zyt$ holds with $t=\varepsilon$, since $z=y=\varepsilon$. hence $aba \preceq_{S_2}^r ba$. Indeed, $a\cdot aba=aaba\to_{S_2}ba$. We next check whether the suffix $ba$ can be reduced further to $a$ in the right-divisibility ordering.

For the rule $bab\to ab$, write its left-hand side as $xy$ with $x=b$ and $y=ab$. The required condition would be $x=w_2bz$ for some $w_2\in\Gamma^+$ and $z\in\Gamma^*$. This is impossible because $x=b$ has length one.
Similarly, for the rule $bb\to b$, taking $x=b$ and $y=b$, the same condition $x=w_2bz$ cannot hold with $w_2\in\Gamma^+$.
Thus, $ba\not\preceq^r_{S_2}a$, and therefore $\operatorname{MinSufMul}_{S_2}(aba)=ba$. Since $ba$ is not a suffix of $b$, we conclude that $aba\not\preceq^r_{S_2}b$.

As a second illustration, consider whether $a\preceq^r_{S_2}ab$. We compute $\operatorname{MinSufMul}_{S_2}(a)$. Similar to previous use of $aa \to \varepsilon$, we have $a \preceq_{S_2}^r \varepsilon$, and consequently $\operatorname{MinSufMul}_{S_2}(a)=\varepsilon$.
Since $\varepsilon$ is a suffix of every word, in particular of $ab$, it follows that $a\preceq^r_{S_2}ab$. Indeed, a witness is the word $aba$, because $aba\cdot a=abaa\to_{S_2}ab$.
\end{example}

By the duality of the left-divisibility and right-divisibility problems,  we have the following theorem.
\begin{corollary}
    The left-divisibility problem is decidable for every finite convergent suffix-erasing STS.
\end{corollary}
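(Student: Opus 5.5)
The plan is to reduce the corollary to Theorem~\ref{thm:rdiv-prefix-erasing} through the word-reversal map. For a word $w=w_1\cdots w_n$, write $w^{\mathrm{rev}}=w_n\cdots w_1$. For an STS $S=(\Gamma,R)$, let $S^{\mathrm{rev}}=(\Gamma,R^{\mathrm{rev}})$ with $R^{\mathrm{rev}}=\{(\ell^{\mathrm{rev}},r^{\mathrm{rev}}) : (\ell,r)\in R\}$. We then verify three facts in turn.

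\textbf{Step 1: reversal preserves the relevant structure.} Reversal is an anti-automorphism of $\Gamma^*$: $(pq)^{\mathrm{rev}}=q^{\mathrm{rev}}p^{\mathrm{rev}}$. Consequently $p\ell q\to_S prq$ holds iff $q^{\mathrm{rev}}\ell^{\mathrm{rev}}p^{\mathrm{rev}}\to_{S^{\mathrm{rev}}} q^{\mathrm{rev}}r^{\mathrm{rev}}p^{\mathrm{rev}}$. So the rewrite graphs of $S$ and $S^{\mathrm{rev}}$ are isomorphic via $w\mapsto w^{\mathrm{rev}}$. Termination and confluence therefore transfer, as already noted in Example~\ref{ex:suffix-erasing}, and $u\equiv_S v$ iff $u^{\mathrm{rev}}\equiv_{S^{\mathrm{rev}}}v^{\mathrm{rev}}$. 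Finiteness is clearly preserved.

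\textbf{Step 2: reversal swaps the two erasing classes.} A suffix-erasing rule $xy\to x$ with $|y|>0$ reverses to $y^{\mathrm{rev}}x^{\mathrm{rev}}\to x^{\mathrm{rev}}$, which is prefix-erasing with nonempty erased prefix $y^{\mathrm{rev}}$. Hence, if $S$ is finite convergent suffix-erasing, then $S^{\mathrm{rev}}$ is finite convergent prefix-erasing.

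\textbf{Step 3: transfer the problem.} For the left-divisibility instance, there exists $w$ with $uw\equiv_S v$ iff there exists $w$ with $w^{\mathrm{rev}}u^{\mathrm{rev}}\equiv_{S^{\mathrm{rev}}}v^{\mathrm{rev}}$. Since reversal is a bijection on $\Gamma^*$, this is exactly the right-divisibility instance $(u^{\mathrm{rev}},v^{\mathrm{rev}})$ for $S^{\mathrm{rev}}$. That instance is decidable by Theorem~\ref{thm:rdiv-prefix-erasing}, and in linear time, because computing $S^{\mathrm{rev}}$ is a constant preprocessing step for fixed $S$ and reversing the inputs takes linear time.

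No step is a serious obstacle. The only point needing care is Step 1: the paper cites the preservation of confluence under reversal, but it should be checked that the rewrite-step correspondence is exact (context $p\,\cdot\,q$ becomes $q^{\mathrm{rev}}\,\cdot\,p^{\mathrm{rev}}$). Once that is checked, the rest is bookkeeping.
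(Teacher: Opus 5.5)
Your proposal is correct and takes the same route as the paper. The paper justifies the corollary in one line, by the duality between left- and right-divisibility under word reversal applied to Theorem~\ref{thm:rdiv-prefix-erasing}. You have written out what that line leaves implicit: reversal maps rewrite steps of $S$ exactly onto rewrite steps of the reversed system; it sends finite convergent suffix-erasing systems to finite convergent prefix-erasing ones; and it turns a left-divisibility instance for $S$ into the right-divisibility instance on the reversed words for the reversed system.
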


For suffix-erasing systems, the only if direction of Lemma~\ref{lem:minimal-suffix-generator} is false. Consider the STS $S=(\Gamma,R)$ with $\Gamma=\{a,b\}$ and $R=\{(ba,b)\}$. This system is suffix-erasing, since the rule $ba\to b$ deletes the nonempty suffix $a$. It is also convergent: the rule strictly decreases word length, and there are no nontrivial critical overlaps.

Now take $u=a$ and $v=b$. Both words are irreducible. We have $u\preceq_S^r v$, because there exists $w=b$ such that $wu=ba\to b=v$. However, $\operatorname{SufMul}_S(a)=\{a\}$, so $\operatorname{SufMul}_{S_1}(a)$. Since $a$ is not a suffix of $b$, the condition that $u_{\operatorname{min}}$ is a suffix of $v$ fails.

Thus, for this convergent suffix-erasing STS, we have $u\preceq_S^r v$ but $u_{\operatorname{min}}$ is not a suffix of $v$. Therefore Lemma~\ref{lem:minimal-suffix-generator} is genuinely specific to the prefix-erasing setting. This is precisely where the forward-search algorithm breaks down for suffix-erasing systems: even after computing $u_{\operatorname{min}}$, checking whether it occurs as a suffix of $v$ no longer characterises right-divisibility.

\section{Right divisibility problem is decidable for finite convergent suffix-erasing STS}\label{sec:suffix-erasing}

In this section, we study the right divisibility for the dual of the class of prefix-erasing STS, namely suffix-erasing STS.
We present an algorithm that decides this in complexity that is exponential in the length of $|u|$ in the worst case, if we want to decide whether $u \preceq_S^r v$ where $S$ is a finite convergent suffix-erasing STS.

Suppose without loss of generality, $u$ and $v$ are already in normal form.
Let $u=u_1u_2 \dots u_n$. Note that if $u \preceq_S^l v$, then it must be the case that $u_k u_{k+1} \dots u_n \preceq_S^r v$ for every $k=1,2,\dots,n$. 

Let $c \in \Gamma$. We define a relation $R_c=\{(u,v) \in \operatorname{IRR}(S) \times \operatorname{IRR}(S):uc \equiv_S v\}$. The graph $(\operatorname{IRR}(S),\{R_c\}_{c \in \Gamma})$ is Cayley graph of monoid $\Gamma^*/\equiv_S$ represented by STS $S$. 

We define a function $\delta:\Gamma \times \operatorname{IRR}(S) \to {\mathcal P}(\operatorname{IRR}(S))$ such that for every $c \in \Gamma$ and $u \in \operatorname{IRR}(S)$, $v \in \delta(c,u)$ if and only if $vc \equiv_S u$. Cayley graphs of monoids have been studied in several settings. However, as far as we are aware, our result does not follow directly from the existing literature \cite{caucal2020cayley,kuske2006logical}. In particular, the structure of Cayley graphs of monoids presented by convergent suffix-erasing semi-Thue systems appears not to have been investigated. We hope that this observation will motivate further study of monoids presented by broader classes of erasing semi-Thue systems.

\begin{definition}\label{def:delta}
    Let $S$ be a convergent STS. For $c\in \Gamma$ and $u\in \operatorname{IRR}(S)$, define
    \[
        \delta(c,u)
        =
        \{v\in \operatorname{IRR}(S) : vc\equiv_S u\}.
    \]
\end{definition}

\begin{lemma}\label{lem:backward-delta}
    Let $S$ be a convergent STS, and let $u,v\in \operatorname{IRR}(S)$. Write $u=u_1u_2\cdots u_n$, where $u_i\in \Gamma$. Define $B_{n+1}=\{v\}$ and, for $k=n,n-1,\dots,1$, define
    \[
        B_k=\bigcup_{z\in B_{k+1}}\delta(u_k,z).
    \]
    For every $k=1,2,\dots,n+1$, we have
    \[
        B_k^S(u,v)
        =
        \{x\in \operatorname{IRR}(S) : xu_ku_{k+1}\cdots u_n\equiv_S v\},
    \]
    where $u_{n+1}u_{n+2}\cdots u_n$ is understood as $\varepsilon$.
\end{lemma}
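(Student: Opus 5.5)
We argue by downward induction on $k$, from $k=n+1$ to $k=1$. Write $s_k=u_ku_{k+1}\cdots u_n$, so that $s_{n+1}=\varepsilon$ and $s_k=u_ks_{k+1}$. The claim is then $B_k=\{x\in\operatorname{IRR}(S): xs_k\equiv_S v\}$, where $B_k$ is the set the statement denotes $B_k^S(u,v)$.

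\emph{Base case $k=n+1$.} We have $B_{n+1}=\{v\}$, so we must show $\{x\in\operatorname{IRR}(S): x\equiv_S v\}=\{v\}$. Since $S$ is convergent, $x\equiv_S v$ holds if and only if $x^{\downarrow}=v^{\downarrow}$. Both $x$ and $v$ are irreducible, so they are their own normal forms, and hence $x=v$.

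\emph{Inductive step.} Assume the claim for $k+1$. We prove the two inclusions by unfolding Definition~\ref{def:delta}.

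($\subseteq$) Let $x\in B_k$. Then $x\in\delta(u_k,z)$ for some $z\in B_{k+1}$, which gives $xu_k\equiv_S z$ with $x$ irreducible. By the induction hypothesis, $zs_{k+1}\equiv_S v$. Since $\equiv_S$ is a congruence, $xu_ks_{k+1}\equiv_S zs_{k+1}\equiv_S v$, that is, $xs_k\equiv_S v$.

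($\supseteq$) Let $x\in\operatorname{IRR}(S)$ with $xu_ks_{k+1}\equiv_S v$. Put $z=(xu_k)^{\downarrow}$, which exists and is irreducible because $S$ is convergent. Then $xu_k\equiv_S z$, so $x\in\delta(u_k,z)$. By the congruence property, $zs_{k+1}\equiv_S xu_ks_{k+1}\equiv_S v$. The induction hypothesis then gives $z\in B_{k+1}$, and therefore $x\in B_k$.

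The only delicate point is the $\supseteq$ direction. There we must produce an intermediate witness $z$ that is irreducible, since both $\delta$ and the sets $B_{k+1}$ range only over $\operatorname{IRR}(S)$; choosing $z$ as the normal form of $xu_k$ supplies such a witness, and this is where convergence is used. Everything else uses only that $\equiv_S$ is a congruence.
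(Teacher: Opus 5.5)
Your proposal is correct and follows the paper's own proof essentially step for step. Both use downward induction on $k$, settle the base case by convergence and irreducibility, and prove the two inclusions of the inductive step by unfolding $\delta$. In the $\supseteq$ direction, both take the normal form $z=(xu_k)^{\downarrow}$ as the irreducible intermediate witness.
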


\begin{corollary}\label{cor:rightdivisibility-b1}
    With the notation above,
    we have
$u\preceq^r_S v$ if and only if 
$B_1^S(u,v)\neq \varnothing$.
\end{corollary}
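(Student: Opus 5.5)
The plan is to read the corollary off Lemma~\ref{lem:backward-delta} at $k=1$ and then pass between arbitrary and irreducible witnesses via normal forms. Since $u$ and $v$ are taken in normal form, Lemma~\ref{lem:backward-delta} gives
\[
B_1^S(u,v)=\{x\in\operatorname{IRR}(S) : xu_1u_2\cdots u_n\equiv_S v\}=\{x\in\operatorname{IRR}(S): xu\equiv_S v\}.
\]
The corollary is therefore the claim that a witness $w\in\Gamma^*$ with $wu\equiv_S v$ exists if and only if an irreducible one does.

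For the direction ``if'', suppose $x\in B_1^S(u,v)$. Then $x\in\Gamma^*$ and $xu\equiv_S v$, so $w=x$ witnesses $u\preceq_S^r v$.

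For the direction ``only if'', let $w\in\Gamma^*$ satisfy $wu\equiv_S v$. Because $S$ is convergent, $w$ has a normal form $x=w^{\downarrow}\in\operatorname{IRR}(S)$ with $w\equiv_S x$. Since $\equiv_S$ is a congruence, $xu\equiv_S wu\equiv_S v$, so $x\in B_1^S(u,v)$ and this set is nonempty.

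If $u$ and $v$ are not already in normal form, I would first replace them by $u^{\downarrow}$ and $v^{\downarrow}$. This does not change whether $u\preceq_S^r v$ holds, as already observed in the proof of Theorem~\ref{thm:rdiv-prefix-erasing}. There is no real obstacle here. The only care needed is to note that $B_1$ ranges over irreducible words while witnesses $w$ range over all of $\Gamma^*$, and normalisation under convergence bridges the two.
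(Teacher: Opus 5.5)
Your proof is correct and essentially matches the paper's: normalise an arbitrary witness $w$ to $w^{\downarrow}$, use that $\equiv_S$ is a congruence, and read off membership from Lemma~\ref{lem:backward-delta}. The only difference is that the paper argues for a general index $k$ (with the suffix $u_k\cdots u_n$) and then specialises to $k=1$, while you work at $k=1$ directly.
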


\begin{lemma}\label{lem:finite-delta}
Let $S=(\Gamma,R)$ be a convergent suffix-erasing STS, where $R$ is finite and every rule has the form $xy \to x$ with $y \neq \epsilon$. Then, for all $c \in \Gamma$ and $u \in \operatorname{IRR}(S)$, the set $\delta(c,u)$ is finite.
\end{lemma}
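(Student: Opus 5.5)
The plan is to bound the length of every $v\in\delta(c,u)$ by $|u|$ plus a constant depending only on $R$. The idea is that, because $v$ is irreducible and the rules only erase suffixes of their left-hand sides, $vc$ reaches its normal form in at most one rewrite step.

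Fix $c\in\Gamma$, $u\in\operatorname{IRR}(S)$ and $v\in\delta(c,u)$. Since $S$ is convergent, $vc\equiv_S u$ with $u$ irreducible means $(vc)^{\downarrow}=u$. I split into two cases according to whether $vc$ is irreducible.

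\emph{Case 1: $vc$ is irreducible.} Then $vc=u$. So $u$ ends in $c$, and $v$ is uniquely determined as $u$ with its last letter removed. This contributes at most one element.

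\emph{Case 2: $vc$ is reducible.} Since $v$ is irreducible, every occurrence of a left-hand side in $vc$ must contain the final letter $c$, and is therefore a suffix of $vc$. Pick any such redex, coming from a rule $xy\to x$ with $|y|>0$, so that $vc=p\,xy$. Because $|y|\ge 1$ and $y$ ends with $c$, the word $px$ is a prefix of $v$. A factor of an irreducible word is irreducible, so $px\in\operatorname{IRR}(S)$.

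In Case 2 we therefore have $vc\to_S px$ with $px$ irreducible. By convergence, $u=(vc)^{\downarrow}=px$. Writing $y=y'c$, we get $v=pxy'=uy'$, where $y'$ is a proper prefix of the right part $y$ of some left-hand side in the finite set $R$.

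Combining the two cases,
\[
\delta(c,u)\subseteq\{u'\} \cup \{uy' : xy'c\to x\in R\},
\]
where $u'$ is $u$ with its last letter removed (relevant only when $u$ ends in $c$). The right-hand side is finite, with at most $1+\sum_{(\ell,r)\in R}|\ell|$ elements.

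The step I expect to need the most care is the observation that the redex must sit at the right end of $vc$, and that contracting it immediately yields a prefix of $v$. This is exactly where the suffix-erasing shape, with $|y|>0$, is used: it ensures that no further rewriting can be triggered after the first step. The argument also makes $\delta(c,u)$ effectively computable, since one only needs to test, for each of the finitely many candidates, whether it is irreducible and whether $vc\to^* u$. That computability is what the subsequent use of Lemma~\ref{lem:backward-delta} will presumably require.
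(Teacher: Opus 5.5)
Your proof is correct and follows essentially the same argument as the paper. You split on whether $vc$ is irreducible, and in the reducible case you use that any redex in $vc$ is a suffix whose contraction gives an irreducible prefix $px$ of $v$; convergence then forces $px=u$ and $v=uy'$ (your count $1+\sum_{(\ell,r)\in R}|\ell|$ is looser than needed, since each rule fixes $y'$ and so gives at most $|R|+1$ elements, but finiteness is unaffected).
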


\begin{theorem}
The right-divisibility problem is decidable for every finite convergent suffix-erasing STS.
\end{theorem}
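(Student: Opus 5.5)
The plan is to use the backward construction of Lemma~\ref{lem:backward-delta} together with Corollary~\ref{cor:rightdivisibility-b1}. Since $S$ is convergent and finite, we first replace $u$ and $v$ by their normal forms $u^{\downarrow}$ and $v^{\downarrow}$; this is computable and preserves right-divisibility. Writing $u^{\downarrow}=u_1\cdots u_n$, the procedure computes the sets $B_{n+1}=\{v^{\downarrow}\}, B_n,\dots,B_1$ and answers \textsc{Yes} exactly when $B_1\neq\varnothing$. Lemma~\ref{lem:finite-delta} shows that each $\delta(u_k,z)$ is finite. By induction on $n+1-k$, each $B_k$ is then a finite union of finite sets, hence finite. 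Everything therefore reduces to showing that $\delta(c,z)$ is not merely finite but \emph{effectively computable} from $c$ and $z$.

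For that step I would bound the length of elements of $\delta(c,z)$ and then enumerate. Let $x\in\delta(c,z)$, so $x\in\operatorname{IRR}(S)$ and $(xc)^{\downarrow}=z$.

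\emph{Case 1: $xc$ is irreducible.} Then $z=xc$, and $x$ is determined by $z$.

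\emph{Case 2: $xc$ is reducible.} Since $x$ is irreducible, every redex of $xc$ must end at the final letter. I claim the normalisation only ever deletes material, so that $z$ is a prefix of $xc$ and in fact a prefix of $x$. The reason is that a suffix-erasing step $pxyq\to pxq$ turns a word into a scattered subword. Moreover, each new redex created after a step must straddle the deletion point, and I would argue that the result stays a prefix of the original word. Once this prefix property is established, $|x|$ is not bounded by it alone: for example, with $ba\to b$ we get $\delta(a,b)\ni b$, but also longer candidates whenever rules allow. So the bound must come from the rules themselves. The first rewrite step applies a rule $\ell=x'y'\to x'$ at the end of $xc$, and it removes at most $|\ell|$ letters. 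I would try to show that after the first step the remaining reductions remove at most a bounded number of further letters, using irreducibility of $x$. Any subsequent redex must end at the new right end, which lies inside the irreducible $x$; I would then use convergence (joinability of critical overlaps) to show that such cascades are limited. This would give $|x|\le |z|+K$ with $K$ depending only on $S$, for instance $K=\max_{\ell\to r\in R}|\ell|$ times some constant. Given such a bound, $\delta(c,z)$ is computed by enumerating all irreducible $x$ with $|x|\le |z|+K$ and testing $(xc)^{\downarrow}=z$.

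The main obstacle is exactly this effective length bound, since Lemma~\ref{lem:finite-delta} gives finiteness but possibly not constructively. If a uniform bound fails because cascades can be long, I would fall back on a more direct argument. Instead of a uniform bound, I would use the structure from the proof of Lemma~\ref{lem:finite-delta}: there every $x\in\delta(c,z)$ has the form $z\,s$, where $s$ is a word such that $zsc$ reduces to $z$. I would then compute the finite set of such $s$ as the set of paths in a finite automaton. Its states would be the suffixes of left-hand sides of rules, recording the pending portion of a redex, and finiteness would guarantee that the automaton's language is finite. Because a finite language of a finite automaton is computable, it can then be listed explicitly. Either route yields computability of $\delta$, and with it an algorithm; its cost is possibly exponential in $|u|$, since the sets $B_k$ may grow multiplicatively.
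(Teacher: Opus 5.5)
Your overall plan is the same as the paper's proof: normalise $u$ and $v$, compute $B_{n+1},\dots,B_1$ backwards, get finiteness from Lemma~\ref{lem:finite-delta} and correctness from Corollary~\ref{cor:rightdivisibility-b1}. You are also right that finiteness of $\delta(c,z)$ alone does not give an algorithm, and that effective computability of $\delta$ is the crux. But you leave exactly this step open. Your length bound depends on an unproven claim that rewrite cascades are limited by joinability of critical overlaps, and the automaton fallback is only sketched. As written, the proposal does not yet establish decidability.

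The missing observation is that there is no cascade at all. Your own remark that the new right end lies inside $x$ almost contains it. In Case~2 the redex is a suffix $\ell=x'y'$ of $xc$ with $y'=y''c$, so $x=wx'y''$ and $xc\to wx'$. The word $wx'$ is a prefix of the irreducible word $x$, so it is itself irreducible. Convergence then forces $z=wx'$, and hence $x=zy''$. So each rule $x'y''c\to x'$ contributes at most one candidate, $zy''$, and only when $z$ ends in $x'$ and $zy''$ is irreducible. Case~1 contributes at most one candidate, obtained by deleting a final $c$ from $z$. Hence $|\delta(c,z)|\le |R|+1$, and $\delta(c,z)$ is computed by checking these explicit candidates. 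Your bound holds with $K=\max_{\ell\to r\in R}|\ell|-1$, and you need neither critical-pair analysis nor an automaton. This is exactly the argument in the proof of Lemma~\ref{lem:finite-delta}, which the paper cites to make $\delta$ effective.
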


\begin{proof}
Let $S$ be a finite convergent suffix-erasing STS, and let $u,v \in \Gamma^*$ be the input words. We want to decide whether $u \preceq_S^r v$.

Since $S$ is convergent, every word has a unique normal form, and this normal form is computable. Moreover, right-divisibility is invariant under replacing words by equivalent words. Therefore, we may first replace $u$ and $v$ by their normal forms. Thus, without loss of generality, assume that $u,v \in \operatorname{IRR}(S)$.

Write $u=u_1u_2\cdots u_n$, where each $u_i \in \Gamma$. Starting from $v$, we compute the sets $B_{n+1},B_n,\dots,B_1$ backwards as follows:
$
B_{n+1}=\{v\},
$
and, for each $k=n,n-1,\dots,1$,
\[
B_k=\bigcup_{z \in B_{k+1}}\delta(u_k,z).
\]

By Lemma~\ref{lem:finite-delta}, for every $c \in \Gamma$ and every irreducible word $z$, the set $\delta(c,z)$ is finite. Since $S$ is finite and suffix-erasing, the proof of Lemma~\ref{lem:finite-delta} gives an effective way to compute $\delta(c,z)$. Hence, starting from the finite set $B_{n+1}=\{v\}$, we can effectively compute each finite set $B_n,B_{n-1},\dots,B_1$.

By Corollary~\ref{cor:rightdivisibility-b1}, we have
$
u \preceq_S^r v
\quad\text{if and only if}\quad
B_1 \neq \varnothing.
$
Thus, after computing $B_1$, it remains only to check whether this finite set is empty. This gives a decision procedure for the right-divisibility problem.
\end{proof}

\begin{algorithm}[!t]
\caption{Right-divisibility for convergent suffix-erasing STSs}
\label{alg:rightdiv-suffix}
\begin{algorithmic}[1]
\Require A finite convergent suffix-erasing STS $S=(\Gamma,R)$ and
         $u,v\in\Gamma^*$
\Ensure Whether $u\preceq_S^r v$
\State Compute the normal forms $\widehat{u}$ and $\widehat{v}$
\If{$\widehat{u}=\varepsilon$}
    \State \Return \textsc{Yes}
\EndIf
\State Write $\widehat{u}=u_1\cdots u_n$, with each $u_i\in\Gamma$
\State $B_{n+1}\gets\{\widehat{v}\}$
\For{$k=n,n-1,\ldots,1$}
    \State $B_k\gets\displaystyle\bigcup_{z\in B_{k+1}}\delta(u_k,z)$
\EndFor
\State \Return \textsc{Yes} if $B_1\neq\varnothing$; otherwise \textsc{No}
\end{algorithmic}
\end{algorithm}

\begin{example}
Consider the STS $S=(\Gamma,R)$ with $\Gamma=\{a,b,c\}$ and $$R=\{(abb,ab),(bcc,bc),(ca,c),(cb,c),(ccc,cc)\}.$$ 
    Now, we want to check two right-divisbility problems:
    \begin{enumerate}
        \item[(i)] Deciding whether $abc \preceq_S^r bac$.

        Both \(abc\) and \(bac\) are irreducible. Write \(abc=u_1u_2u_3\), where \(u_1=a\), \(u_2=b\), and \(u_3=c\). We start with \(B_4=\{bac\}\). Then we compute backwards. First, \(B_3=\delta(c,bac)\). Since \(bac=ba \cdot c\), we have \(ba \in \delta(c,bac)\). There are no other possibilities, so \(B_3=\{ba\}\).

        Next, \(B_2=\delta(b,ba)\). There is no irreducible word \(x\) such that \(xb \equiv_S ba\). Indeed, \(ba\) does not end in \(b\), and none of the rules can reduce a word of the form \(xb\) to \(ba\). Hence \(B_2=\varnothing\). Therefore \(B_1=\varnothing\).
        
        By the backward-divisibility criterion, we have \(abc \preceq_S^r bac\) if and only if \(B_1\) is non-empty. Since \(B_1=\varnothing\), we conclude that \(abc \npreceq_S^r bac\). Equivalently, there is no word \(w \in \Gamma^*\) such that \(wabc \equiv_S bac\).
        
        \item[(ii)] Deciding whether $bb \preceq_S^r ab$.
        
        Both \(bb\) and \(ab\) are irreducible. Write \(bb=u_1u_2\), where \(u_1=b\) and \(u_2=b\). We start with \(B_3=\{ab\}\). Then \(B_2=\delta(b,ab)\). Now there are two possibilities. First, since \(ab=a \cdot b\), we have \(a \in \delta(b,ab)\). Second, since \(abb \to ab\), we also have \(ab \in \delta(b,ab)\). Hence \(B_2=\{a,ab\}\). 
        
        Next, \(B_1=\delta(b,a)\cup \delta(b,ab)\). We have \(\delta(b,a)=\varnothing\), while \(\delta(b,ab)=\{a,ab\}\). Therefore \(B_1=\{a,ab\}\). Since \(B_1\neq\varnothing\), the backward-divisibility criterion implies that \(bb \preceq_S^r ab\). Indeed, there are witnesses: taking \(w=a\), we get \(abb \to ab\), and taking \(w=ab\), we get \(abbb \to abb \to ab\).
    \end{enumerate}

\end{example}

The worst-case complexity of this algorithm is base-$|R|$ exponential time in the length of input and is discussed in Appendix~\ref{app:backwardcomplexity}

It is not hard to using properties reversal of STS, that the following statement is true.

\begin{corollary}
    The left-divisibility problem is decidable for every finite convergent prefix-erasing STS.
\end{corollary}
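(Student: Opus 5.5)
We reduce left-divisibility for prefix-erasing systems to right-divisibility for suffix-erasing systems via string reversal, and then invoke the theorem of this section. For a word $w=w_1\cdots w_n$ write $w^{\rho}=w_n\cdots w_1$, and for $S=(\Gamma,R)$ let $S^{\rho}=(\Gamma,\{(\ell^{\rho},r^{\rho}) : (\ell,r)\in R\})$.

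The first step is to check three properties of $S^{\rho}$. It is finite, since it has as many rules as $S$. It is suffix-erasing: a rule $xy\to y$ with $|x|>0$ becomes $y^{\rho}x^{\rho}\to y^{\rho}$, which deletes the nonempty suffix $x^{\rho}$. It is convergent, because $p\ell q\to_S prq$ holds exactly when $q^{\rho}\ell^{\rho}p^{\rho}\to_{S^{\rho}} q^{\rho}r^{\rho}p^{\rho}$. So $\rho$ is an isomorphism between the rewrite graphs of $S$ and $S^{\rho}$ on $\Gamma^*$, and it transports termination and confluence, as already noted in Example~\ref{ex:suffix-erasing}. The same correspondence, applied along a chain of rewrite steps in either direction, gives $a\equiv_S b$ iff $a^{\rho}\equiv_{S^{\rho}} b^{\rho}$.

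The second step transfers the problem. Since $(uw)^{\rho}=w^{\rho}u^{\rho}$, there exists $w$ with $uw\equiv_S v$ iff there exists $w$ with $w^{\rho}u^{\rho}\equiv_{S^{\rho}} v^{\rho}$. As $w\mapsto w^{\rho}$ is a bijection of $\Gamma^*$, this holds iff there exists $w'$ with $w'u^{\rho}\equiv_{S^{\rho}}v^{\rho}$, that is, iff $u^{\rho}\preceq^r_{S^{\rho}}v^{\rho}$. The construction of $S^{\rho}$, $u^{\rho}$ and $v^{\rho}$ is effective. Applying the decidability theorem for finite convergent suffix-erasing systems to $S^{\rho}$ then decides the original left-divisibility instance.

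There is no real obstacle. The only point needing care is the bookkeeping of the reversal: $\rho$ is an anti-automorphism, so it swaps the contexts $p$ and $q$ and turns left multipliers into right multipliers. One must check that the rule orientation is preserved, so that $S^{\rho}$ is genuinely suffix-erasing rather than prefix-erasing. Beyond that, we only need $\rho$ to commute with one-step rewriting, which is immediate.
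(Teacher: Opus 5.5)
Your proof is correct and follows the paper's route. The paper justifies this corollary only by a one-line appeal to reversal of semi-Thue systems together with the decidability theorem for finite convergent suffix-erasing systems, and your argument supplies exactly those details: reversal turns $xy\to y$ into a suffix-erasing rule, preserves convergence, and converts $uw\equiv_S v$ into $w^{\rho}u^{\rho}\equiv_{S^{\rho}}v^{\rho}$.
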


For prefix-erasing systems, however, $\delta(c,u)$ need not be finite. Consider the convergent prefix-erasing STS $S=(\{a,b\},R)$ with $R=\{(ab,b)\}$. The rule $ab\to b$ deletes the nonempty prefix $a$, so $S$ is prefix-erasing. 
Consider $\delta(b,b)$. For every $m\geq 0$, the word $a^m$ is irreducible, and $a^m b\to_S^* b$. Hence $a^m\in \delta(b,b)$ for every $m\geq 0$. Therefore $\delta(b,b)$ is infinite.

This shows precisely where the termination of backward-search algorithm breaks down for prefix-erasing systems. The equivalence $u\preceq_S^r v$ if and only if $B_1\neq \varnothing$ remains true, but the sets $B_k$ may be infinite, because the transition sets $\delta(c,s)$ may be infinite. Thus the naive backward construction no longer gives a finite decision procedure.

\section{Lifting Rules for Term Rewriting Systems}
\label{sec:lifting-rules}

\subsection{Lifting Rules}

The erasing classes of semi-Thue systems considered in the preceding sections suggest natural analogues for term rewriting systems over arbitrary signatures. In particular, the correspondences between prefix-, suffix-, and factor-erasing string rewrite rules motivate classes of term rewrite rules in which contexts surrounding variables or subterms are erased.

We begin with the standard notion of a subterm rule. A rule of this form can be written as $C[t]\to t$, where the right-hand side is obtained by selecting a subterm of the left-hand side and erasing its surrounding context. Over a unary signature, subterm rules correspond to prefix-erasing string rewrite rules. Abadi and Cortier showed that deduction and static equivalence are decidable for subterm-convergent equational theories \cite{abadi2006deciding}.

We introduce two related classes of rewrite rules. The first is the class of \emph{variable-lifting rules}, in which an inner context surrounding a variable is removed while an outer context is preserved. Over unary signatures, variable-lifting rules correspond to suffix-erasing STSs. The second and more general class is that of \emph{subterm-lifting rules}, in which an arbitrary term, rather than only a variable, is lifted through an erased context -- a more general analogue of factor-erasing STS.

\begin{definition}[Variable lifting]
A rewrite rule $l\to r$ is called a \emph{variable-lifting rule} if there exist one-hole contexts $C[\circ]$ and $D[\circ]$, with $D[\circ]\neq\circ$, and a variable $x$ such that $l=C[D[x]]$ and $r=C[x]$. A TRS is called \emph{variable-lifting} if each of its rules is a variable-lifting rule.
\end{definition}

\begin{definition}[Subterm lifting]
A rewrite rule $l\to r$ is called a \emph{subterm-lifting rule} if there exist one-hole contexts $C[\circ]$ and $D[\circ]$, with $D[\circ]\neq\circ$, and a term $t$ such that $l=C[D[t]]$ and $r=C[t]$. A TRS is called \emph{subterm-lifting} if each of its rules is a subterm-lifting rule.
\end{definition}

Every variable-lifting rule is a subterm-lifting rule, obtained by taking $t=x$. Moreover, subterm lifting generalizes the standard notion of a subterm rule. Indeed, if the outer context is the identity context $C[\circ]=\circ$, then a subterm-lifting rule has the form $D[t]\to t$ and is therefore a subterm rule.

The presence of the outer context $C$ nevertheless makes subterm-lifting theories behave differently from ordinary subterm theories with respect to deduction. The following example illustrates this distinction.

\begin{example}
Consider the equational theories $E=\{f(g(x))\approx f(x)\}$ and $E'=\{g(x)\approx x\}$, together with the frame $\varphi=\nu a.\{w\mapsto g(a)\}$. The theory $E$ is induced by a subterm-lifting rule, whereas $E'$ is induced by a subterm rule.

Under both theories, the term $f(a)$ is deducible. Indeed, applying the public symbol $f$ to the term represented by $w$ yields $f(g(a))$, and we have $f(g(a))=_E f(a)$ and $f(g(a))=_{E'}f(a)$. Hence, $\varphi\vdash_E f(a)$ and $\varphi\vdash_{E'}f(a)$.

The term $a$, however, is deducible only under $E'$. Under $E'$, the recipe $w$ evaluates to $g(a)\approx_{E'}a$, so $\varphi\vdash_{E'}a$. In contrast, the equation in $E$ removes $g$ only when it occurs immediately below $f$. It does not identify $g(a)$ with $a$, and therefore $\varphi\not\vdash_E a$.
\end{example}

Thus, although the rules $f(g(x))\to f(x)$ and $g(x)\to x$ erase the same inner context $g[\circ]$, the preserved outer context restricts where the erasure can be applied. Subterm lifting can consequently produce a different deduction relation from that induced by ordinary subterm rules.

\subsection{Decision problems}

The correspondence with erasing STSs motivates natural decision problems for these classes. We showed in Section~\ref{sec:suffix-erasing} that right-divisibility is decidable for convergent suffix-erasing STSs. Since suffix-erasing rules correspond, over unary signatures, to variable-lifting rules, it is natural to ask whether an analogous decidability result holds for deduction.

\begin{openquestion}
Is deduction decidable for convergent variable-lifting TRSs?
\end{openquestion}

Since variable lifting is a special case of subterm lifting, a more general question is the following.

\begin{openquestion}
Is deduction decidable for convergent subterm-lifting TRSs?
\end{openquestion}

Despite their simple syntactic form, variable-lifting theories do not necessarily satisfy the local stability property. Local stability is one of the standard techniques used to establish decidability of deduction and static equivalence, and it is used in the decidability results for convergent contracting TRSs in \cite{abadi2006deciding,bunch2024knowledge}. Thus, the existing local-stability framework does not immediately settle the preceding questions.

\begin{definition}[Local Stability {\cite{abadi2006deciding,bunch2024knowledge}}]
A convergent TRS, $R$, is \emph{locally stable} if, for every
$R$-normalized frame $\phi = \nu \widetilde{n}.\sigma$, there exists a
finite set $\operatorname{sat}(\phi)$ of ground terms such that:
\begin{itemize}
    \item $\operatorname{ran}(\sigma) \subseteq \operatorname{sat}(\phi)$
    and $\operatorname{fn}(\phi) \subseteq \operatorname{sat}(\phi)$;

    \item if $M_1,\ldots,M_k \in \operatorname{sat}(\phi)$ and
    $f(M_1,\ldots,M_k)
    \in \operatorname{st}(\operatorname{sat}(\phi))$, then
    $f(M_1,\ldots,M_k) \in \operatorname{sat}(\phi)$;

    \item if
    $C[S_1,\ldots,S_l] \xrightarrow{\epsilon}_{R} M$, where $C$ is a
    context with $|C| \leq c_R$ and
    $\operatorname{fn}(C) \cap \widetilde{n} = \varnothing$, and
    $S_1,\ldots,S_l \in \operatorname{sat}(\phi)$, then there exist a
    context $C'$ and $S'_1,\ldots,S'_k \in \operatorname{sat}(\phi)$
    such that $|C'| \leq c_R^2$,
    $\operatorname{fn}(C') \cap \widetilde{n} = \varnothing$, and
    $M \xrightarrow{*}_{R} C'[S'_1,\ldots,S'_k]$.
    \item if $M \in \operatorname{sat}(\phi)$, then $\phi \vdash_R M$.
\end{itemize}
\end{definition}

\begin{proposition}\label{prop:notlocallystable}
There exists a convergent variable-lifting theory $E$ and a frame $\varphi$ such that $\varphi$ is not $E-$local stabile.
\end{proposition}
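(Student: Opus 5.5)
We exhibit a concrete unary variable-lifting theory and argue that no finite set $\operatorname{sat}(\varphi)$ can satisfy all four clauses of local stability. Take the signature $\{f,g\}$ of unary symbols and the rule $f(g(x))\to f(x)$. This is variable-lifting with $C=f[\circ]$ and $D=g[\circ]$. Over words it is the suffix-erasing rule $fg\to f$, which is length-reducing and has no self-overlap, since no proper suffix of $fg$ is a prefix of $fg$. Hence it is convergent, and $c_R$ is a small constant. As the frame we take $\varphi=\nu a.\{w\mapsto a\}$, or, if we need to force non-trivial reasoning about restricted material, $\varphi=\nu a.\{w\mapsto g^k(a)\}$.

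The argument combines the first and third clauses. From the first clause, $a\in\operatorname{sat}(\varphi)$. Choose the context $C=f(g^{m}(\circ))$ with $m$ bounded so that $|C|\le c_R$, and fill it with $S_1=a$. This gives a root step $f(g^m(a))\xrightarrow{\epsilon}_R f(g^{m-1}(a))$, and iterating normalises it to $f(a)$. That alone is harmless, so the real plan is to use the opposite phenomenon. The outer symbol $f$ can absorb an unbounded number of inner $g$'s, which lie \emph{below} saturated terms rather than above them.

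Concretely, take the frame $\varphi=\nu a.\{w\mapsto g^N(a)\}$ and let $N$ range. Alternatively, fix a frame whose handle maps to $g(a)$ and observe that $f(g(a))\to f(a)$. Here $f(a)$ is deducible, but to express it as $C'[S'_1,\dots]$ with $S'_i\in\operatorname{sat}(\varphi)$ and small $C'$ we need $a$ itself, or a term $f(a)$, in $\operatorname{sat}(\varphi)$. The fourth clause forbids $a\in\operatorname{sat}(\varphi)$, because $a$ is not deducible: by the paper's earlier example, $g(a)\not\equiv_E a$ and every recipe is $\overline{w}(w)$ or a public name. So $f(a)$ itself must be placed in $\operatorname{sat}(\varphi)$. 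The key step is to iterate this with richer terms. Use the frame $\{w\mapsto g(h(g(a)))\}$, or better a family built with an additional symbol $h$ and rules $f(g(x))\to f(x)$, $k(h(x))\to k(x)$. This produces an infinite chain of deducible terms that the third clause forces into $\operatorname{sat}(\varphi)$, each not of the form $C'[\ldots]$ over earlier members with $|C'|\le c_R^2$. Since $\operatorname{sat}(\varphi)$ must be finite, this gives a contradiction.

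The main obstacle is making this chain genuinely infinite. In the single-rule unary example above, the forced additions appear to stabilise, so finiteness may actually hold there. I would first try to compute $\operatorname{sat}(\varphi)$ explicitly for the one-rule example, and check whether the subterm-closure clause (clause 2) combined with clause 3 forces $f(g^i(a))$-style terms indefinitely. If it does not, I would switch to a two-symbol alternating system such as $f(g(x))\to f(x)$, $g(f(x))\to g(x)$. Its word form is $fg\to f$, $gf\to g$, and the overlap $fgf$ joins to $f$ both ways, so the system is still convergent. In that system, pumping the alternation under a fixed bounded context should yield infinitely many required elements.
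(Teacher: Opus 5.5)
There is a genuine gap. You never fix a theory and a frame and prove that every candidate $\operatorname{sat}(\varphi)$ must be infinite, and none of the candidates you list works.

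For the single rule $f(g(x))\to f(x)$, your suspicion of stabilisation is correct. Take $\varphi=\nu a.\{w\mapsto g^N(a)\}$. Then the finite set $\{g^N(a),f(a)\}$ satisfies all four clauses:
\begin{itemize}
\item a root step on $\overline{p}(S)$ with $S$ in this set needs either $p=fgp''$, giving $\overline{fp''}(S)$, which is a context of size at most $|C|$ over the same $S$;
\item or $p=f$ and $S=g^N(a)$, giving a term that normalises to $f(a)$;
\item both elements are deducible, via $w$ and $f(w)$.
\end{itemize}
Your argument that $f(a)$ is forced into $\operatorname{sat}(\varphi)$ is right, but it forces only finitely many terms. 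The variant with $k(h(x))\to k(x)$ comes with no argument.

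The fallback system $\{f(g(x))\to f(x),\ g(f(x))\to g(x)\}$ is not convergent. The term $f(g(f(x)))$ rewrites at the root to the irreducible $f(f(x))$, and at position $1$ to $f(g(x))\to f(x)$. In word form this is $fgf\to ff$ versus $fgf\to fg\to f$. So your claim that the overlap joins is false, and this is not an admissible theory.

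The missing idea is to leave the unary setting. Over unary symbols a context has at most one hole. So clause (iii) never lets you take material lifted out of one saturated term and attach it to a different saturated term, and that is exactly what drives unbounded growth.

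The paper uses the binary variable-lifting rule $f(f(x,y),z)\to f(x,z)$, with $C=f(\circ,z)$ and $D=f(\circ,y)$, and the frame $\nu a,b.\{x_1\mapsto f(a,b)\}$. Put $t_0=f(a,b)$ and $t_{k+1}=f(a,t_k)$.
\begin{itemize}
\item Suppose $t_k\in\operatorname{sat}(\varphi)$. The root step $f(t_0,t_k)\xrightarrow{\epsilon}_R t_{k+1}$ with the context $f(\circ_1,\circ_2)$ triggers clause (iii).
\item Since $t_{k+1}$ is in normal form, it must literally equal $C'[S'_1,\ldots,S'_l]$.
\item If $C'$ were not a hole, its left argument would have to be $a$. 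It cannot be part of $C'$, because $a$ is restricted.
\item It cannot be an element of $\operatorname{sat}(\varphi)$ either, by clause (iv): every recipe evaluates to a term whose root is $f$ or a public name, and the rule preserves a root $f$, so $a$ is not deducible.
\end{itemize}
Hence $t_{k+1}\in\operatorname{sat}(\varphi)$, and by induction $\operatorname{sat}(\varphi)$ contains the infinitely many distinct terms $t_k$.
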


\subsection{Simultaneous lifting}

We now generalize the preceding notions by allowing several non-overlapping inner contexts to be erased simultaneously.

\begin{definition}[Simultaneous variable lifting]
Let $n\geq 1$. A rewrite rule $l\to r$ is a \emph{simultaneous variable-lifting (SVL) rule} if there exist an $n$-hole context $C[\circ_1,\ldots,\circ_n]$, nontrivial one-hole contexts $D_1[\circ],\ldots,D_n[\circ]$, and pairwise distinct variables $x_1,\ldots,x_n$ such that
\begin{align*}
l &= C[D_1[x_1],\ldots,D_n[x_n]],\\
r &= C[x_1,\ldots,x_n].
\end{align*}
A TRS is called \emph{simultaneous variable-lifting} if each of its rules is an SVL rule.
\end{definition}

An SVL rule simultaneously erases the inner contexts $D_1,\ldots,D_n$ while preserving the common outer context $C$. Ordinary variable lifting is precisely the special case $n=1$.

\begin{definition}[Simultaneous subterm lifting]
Let $n\geq 1$. A rewrite rule $l\to r$ is called a \emph{simultaneous subterm-lifting rule}, or an \emph{SSL rule}, if there exist an $n$-hole context $C[\circ_1,\ldots,\circ_n]$, nontrivial one-hole contexts $D_1[\circ],\ldots,D_n[\circ]$, and terms $t_1,\ldots,t_n$ such that
\[
l = C[D_1[t_1],\ldots,D_n[t_n]] \qquad \mbox{ and } \qquad r = C[t_1,\ldots,t_n].
\]
A TRS is called \emph{simultaneous subterm-lifting} if each of its rules is an SSL rule.
\end{definition}

\begin{remark}
    Every SVL rule is an SSL rule, obtained by taking $t_i=x_i$ for each $i\in{1,\ldots,n}$.
\end{remark}

\begin{remark}
    The inclusion of SVL in SSL is strict: for example, $R=\{f(g(h(g(x))),g(y)) \to f(g(g(x)),y)\}$ is an SSL TRS, but not SVL. 
\end{remark}
 Thus, simultaneous subterm lifting generalizes simultaneous variable lifting by permitting arbitrary terms, rather than only pairwise distinct variables, to be lifted through the erased contexts. Ordinary subterm lifting is the special case $n=1$.

\subsection{Relation to embedded rewrite systems}

Our approach is complementary to previous extensions of subterm-convergent theories. Bunch et al.~\cite{bunch2024knowledge} introduced the classes of graph-embedded and contracting TRSs in order to capture equational theories lying beyond the subterm-convergent setting. They showed that deduction and static equivalence are decidable for convergent contracting TRSs, whereas both problems are undecidable for the broader class of convergent graph-embedded TRSs.

The classes studied here arise from a different perspective. Rather than beginning with a general embedding relation and identifying a subclass for which existing proof techniques apply, we begin with simple erasing string rewrite systems. Decidability results in the unary setting then motivate natural generalizations to arbitrary signatures. We next compare these lifting classes with the homeomorphic-embedding relation considered in \cite{bunch2024knowledge}.

\begin{definition}[Homeomorphic embedding {\cite{bunch2024knowledge}}]
The \emph{homeomorphic-embedding relation}, denoted by $\succeq_{\mathit{emb}}$, is the least binary relation on terms satisfying the following conditions:
\begin{enumerate}
\item $x\succeq_{\mathit{emb}}x$ for every variable $x$;

\item if $s_i\succeq_{\mathit{emb}}t_i$ for every $i\in\{1,\ldots,n\}$, then $f(s_1,\ldots,s_n)\succeq_{\mathit{emb}}f(t_1,\ldots,t_n)$;

\item if $s_i\succeq_{\mathit{emb}}t$ for some $i\in\{1,\ldots,n\}$, then $f(s_1,\ldots,s_n)\succeq_{\mathit{emb}}t$.

\end{enumerate}
A TRS $R$ is called \emph{homeomorphic-embedded} if $\ell\succeq_{\mathit{emb}}r$ for every rule $\ell\to r\in R$.
\end{definition}

The second condition preserves a common outer function symbol, whereas the third removes a surrounding context by selecting one of its arguments. These two operations are sufficient to embed the right-hand side of every simultaneous subterm-lifting rule into its left-hand side.

\begin{proposition}\label{prop:homemb}
Every SSL TRS is homeomorphic-embedded.
\end{proposition}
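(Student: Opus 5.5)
The plan is to reduce everything to three facts about $\succeq_{\mathit{emb}}$, proved by structural induction: (a) reflexivity, $t\succeq_{\mathit{emb}}t$ for every term $t$; (b) the \emph{context-removal} property, $D[s]\succeq_{\mathit{emb}}s$ for every one-hole context $D[\circ]$ and every term $s$; and (c) the \emph{congruence} property for multi-hole contexts, namely if $s_i\succeq_{\mathit{emb}}t_i$ for all $i$, then $C[s_1,\ldots,s_n]\succeq_{\mathit{emb}}C[t_1,\ldots,t_n]$. Given an SSL rule $l\to r$ with $l=C[D_1[t_1],\ldots,D_n[t_n]]$ and $r=C[t_1,\ldots,t_n]$, (b) gives $D_i[t_i]\succeq_{\mathit{emb}}t_i$ for each $i$, and (c) then yields $l\succeq_{\mathit{emb}}r$. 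Since this holds for every rule, the TRS is homeomorphic-embedded.

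For (a), I would induct on $t$. A variable is handled by clause 1. For $t=f(t_1,\ldots,t_k)$, the induction hypothesis together with clause 2 suffices; constants and names are the case $k=0$ of clause 2. For (b), I would induct on the depth of the hole in $D$. If $D=\circ$, this is (a). Otherwise $D=f(s_1,\ldots,D'[\circ],\ldots,s_k)$ with the hole in argument $i$. The induction hypothesis gives $D'[s]\succeq_{\mathit{emb}}s$, and clause 3 lifts this to $D[s]\succeq_{\mathit{emb}}s$. Nontriviality of $D_i$ plays no role here, since the claim is only that $\succeq_{\mathit{emb}}$ holds.

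For (c), I would induct on the structure of $C$. If $C$ is a single hole, the hypothesis gives the result directly. If $C$ is a hole-free subterm, (a) applies. If $C=f(C_1,\ldots,C_k)$, the holes are distributed among the $C_j$. Applying the induction hypothesis to each $C_j$ with its own holes, and clause 2, finishes the argument.

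The main obstacle is mostly bookkeeping. I must treat names and constants as nullary symbols, so that clause 2 covers them and reflexivity holds for ground subterms such as those occurring in $t_i$ or in $C$. I must also make sure the multi-hole induction in (c) correctly partitions the holes among the arguments. I expect no genuine difficulty beyond stating these lemmas carefully.
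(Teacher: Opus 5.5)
Your proposal is correct and follows the same route as the paper: repeated use of clause 3 gives $D_i[t_i]\succeq_{\mathit{emb}}t_i$, and clause 2 applied along the outer context $C$ then gives $l\succeq_{\mathit{emb}}r$. Your separate reflexivity lemma, covering variables, names and constants as nullary symbols, and hole-free parts of $C$, makes explicit the base cases that the paper's short proof uses without stating them.
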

\begin{remark}
    There is a non-SSL homeomorphic-embedding TRS. For example, consider TRS $R=\{f(g(h(x)),y) \to g(x)\}$ is not SSL, but a homeomorphic embedding TRS.
\end{remark}

\begin{corollary}
    Every SSL and SVL (and hence also ordinary subterm lifting and variable lifting) TRSs are homeomorphic-embedding.
\end{corollary}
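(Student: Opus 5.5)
The corollary combines Proposition~\ref{prop:homemb} with the inclusions among the lifting classes, so the plan is to establish those inclusions and then invoke the proposition. For SSL this is exactly Proposition~\ref{prop:homemb}. For SVL, the earlier remark shows that every SVL rule is an SSL rule: take $t_i=x_i$. Hence every SVL TRS is an SSL TRS, and so it is homeomorphic-embedded. Ordinary subterm lifting is the case $n=1$ of SSL, and ordinary variable lifting is the case $n=1$ of SVL. A one-hole context $C[\circ]$ is an $n$-hole context with $n=1$, and the single inner context $D$ is nontrivial as required, so both ordinary classes fall inside the simultaneous ones.

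If we also want the corollary not to rely on the proposition as a black box, the key step is the proof of Proposition~\ref{prop:homemb} itself. I would carry it out in two lemmas.

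\textbf{Lemma A.} For every one-hole context $D[\circ]$ and term $t$, we have $D[t]\succeq_{\mathit{emb}} t$. This needs reflexivity first: $s\succeq_{\mathit{emb}} s$ for every term $s$. Reflexivity follows by induction on $s$, using clause~1 for variables and clause~2 for function applications, with constants treated as $0$-ary symbols under clause~2. Lemma A then follows by induction on $D$. If $D=\circ$, apply reflexivity. If $D=f(s_1,\ldots,D'[\circ],\ldots,s_k)$, combine the induction hypothesis $D'[t]\succeq_{\mathit{emb}} t$ with clause~3.

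\textbf{Lemma B.} For every $n$-hole context $C$ and terms with $s_i\succeq_{\mathit{emb}} t_i$ for each $i$, we have $C[s_1,\ldots,s_n]\succeq_{\mathit{emb}} C[t_1,\ldots,t_n]$. The proof is by induction on $C$. A hole gives the hypothesis directly. A subterm of $C$ containing no holes is handled by reflexivity. A node $f(\ldots)$ is handled by clause~2, applying the induction hypothesis to each argument together with the holes it contains.

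Applying Lemma B with $s_i=D_i[t_i]$ and using Lemma A yields $\ell\succeq_{\mathit{emb}} r$ for every SSL rule.

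I expect the main obstacle to be a bookkeeping subtlety rather than a conceptual one. The embedding is defined only through clause~1 for variables, so reflexivity on ground subterms and constants has to go through clause~2 with arity $0$. The holes must also be distributed correctly among the arguments when applying clause~2 in Lemma B. Once reflexivity is established explicitly, the inclusions close the proof of the corollary.
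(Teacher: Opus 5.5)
Your proposal is correct and follows the paper's route: the corollary is Proposition~\ref{prop:homemb} combined with the inclusion of SVL in SSL (via $t_i=x_i$) and the $n=1$ specialisations, and your Lemmas~A and~B are exactly the paper's two steps (clause~3 applied repeatedly to get $D_i[t_i]\succeq_{\mathit{emb}} t_i$, then clause~2 along the outer context $C$). Your explicit reflexivity argument fills in something the paper's sketch uses without stating, namely the base case $t_i\succeq_{\mathit{emb}} t_i$ for non-variable $t_i$ and the treatment of the hole-free parts of $C$.
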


In \cite{bunch2024knowledge}, it has been proved that there exists homeomorphic embedded TRS for which deduction is undecidable. The simultaneous lifting classes therefore identify syntactically restricted subclasses of homeomorphic-embedded systems whose knowledge problems remain to be investigated.
This inclusion relations can be visualized as in the diagram in Figure~\ref{fig:TRS-relations}.

\begin{figure}[t]
\begin{center}
\definecolor{oiBlue}{HTML}{0072B2}       
\definecolor{oiSky}{HTML}{56B4E9}        
\definecolor{oiYellow}{HTML}{F0E442}     
\definecolor{oiPurple}{HTML}{CC79A7}     
\definecolor{oiVermillion}{HTML}{D55E00} 

\begin{tikzpicture}[scale=0.8, every node/.style={transform shape}]
    \draw[black, thick, rounded corners=12pt, fill=black!5]
        (-6.3,-3.7) rectangle (6.3,4.5);
    \node[anchor=north east] at (6.0,4.25) {\textbf{Homeomorphic-Embedding}};

    \draw[oiBlue, thick, rounded corners=10pt, fill=oiSky!20]
        (-5.2,-2.7) rectangle (5.2,3.5);
    \node[anchor=north, text=oiBlue] at (0,3.25)
        {\textbf{Simultaneous Subterm Lifting}};

    \draw[oiYellow!55!black, thick, dashed, rounded corners=10pt,
          fill=oiYellow!45, fill opacity=0.6]
        (-4.4,-1.7) rectangle (1.9,2.4);
    \node[align=center, text=oiYellow!45!black] at (-2.8,1.8)
        {\textbf{Simultaneous}\\\textbf{Variable Lifting}};

    \draw[oiPurple!70!black, thick, densely dotted, rounded corners=10pt,
          fill=oiPurple!35, fill opacity=0.55]
        (-3,-2.2) rectangle (4.5,1.2);
    \node[align=center, text=oiPurple!60!black] at (3.5,-1.6)
        {\textbf{Subterm}\\\textbf{Lifting}};

    \draw[oiBlue, thick, dash dot, rounded corners=8pt,
          pattern=north west lines, pattern color=oiBlue]
        (-2.8,-0.9) rectangle (1.0,0.4);
    \node[align=center, text=oiBlue, fill=white, fill opacity=0.85,
          text opacity=1, rounded corners=2pt, inner sep=2pt] at (-2,-0.25)
        {\textbf{Variable}\\\textbf{Lifting}};

    \draw[oiVermillion, thick, rounded corners=8pt,
          pattern=north east lines, pattern color=oiVermillion]
        (-0.6,-1.0) rectangle (3.0,0.3);
    \node[text=oiVermillion, fill=white, fill opacity=0.85,
          text opacity=1, rounded corners=2pt, inner sep=2pt] at (2.05,-0.35)
        {\textbf{Subterm}};
\end{tikzpicture}

\end{center}
\caption{Relations between different TRS classes.}
\label{fig:TRS-relations}
\end{figure}

We notice however that our notion of variable lifting theories are sharper to capture the TRS for which deduction becomes undecidable. The proof is by reduction to MPCP problem which is known to be undecidable. 

\begin{theorem}
There exists a convergent simultaneous variable-lifting TRS for which the deduction problem is undecidable.
\end{theorem}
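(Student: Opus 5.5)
We reduce the modified Post correspondence problem (MPCP) to deduction modulo a fixed convergent SVL system, with the instance encoded in the frame rather than in the rewrite rules. Fix a binary alphabet $\{a,b\}$ and encode words as unary towers: for a word $w$ let $\overline{w}(t)$ be as in the preliminaries, with unary symbols $a,b$. An MPCP instance $(x_1,y_1),\dots,(x_k,y_k)$ with distinguished first pair is placed in the frame. Public symbols let the attacker build candidate pairs of towers. Rewriting only lets it erase a matched common layer on both components at once, and only inside a fixed outer context.

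Concretely, we use a binary constructor $p(\cdot,\cdot)$, a unary ``check'' symbol $h$, and a binary concatenation-like symbol $g(\cdot,\cdot)$ standing for ``prepend the tower''. Rather than concatenating towers, which unary erasing cannot express, we represent each pair by terms $\overline{x_i}(z)$ with a variable hole. To stay within the SVL format, the frame gives terms $u_i=p(\overline{x_i}(e),\overline{y_i}(e))$ for a restricted name $e$. We add the SVL rules
\[
h(p(a(x),a(y)))\to h(p(x,y)), \qquad h(p(b(x),b(y)))\to h(p(x,y)).
\]
Each is an instance of $C[D_1[x],D_2[y]]\to C[x,y]$ with $C=h(p(\circ_1,\circ_2))$ and $D_1=D_2=a(\circ)$ or $b(\circ)$. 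These rules let $h$ peel equal leading letters from both components.

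To compose solutions we add a public binary symbol $c$ and SVL rules that splice frame pairs together. The plan is a rule of the form
\[
c(p(a(x),y'),\,q)\to c(p(x,y'),\,a\text{-shifted }q),
\]
but this is not SVL, since the right side must be obtained by deleting contexts only. So we switch to an ordering trick: encode the pairs reversed, build them from the inside using the recipe structure, and have rules only delete matched letters. The target is $h(p(f,f))$, where $f$ is a fresh restricted name reachable only when everything cancels. Only the first pair's base is $f$; the other pairs carry hole terms that the attacker instantiates by nesting, with a public nesting symbol and SVL rules that erase the nesting wrapper once the inner components are fully consumed.

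The steps are as follows.
\begin{enumerate}
\item Fix the signature and the SVL rules, and verify convergence. Termination follows because every rule is size-decreasing. For confluence we check critical pairs; these are only overlaps of the $h$-rules with each other, which are joinable since the rules are left-linear and their left-hand sides are non-overlapping apart from trivial self-overlaps.
\item Soundness: from an MPCP solution, build a recipe nesting the frame handles in solution order, starting with pair $1$, and show it normalizes to the target.
\item Completeness: show by a normal-form invariant that any recipe $M$ with $M\sigma\downarrow=h(p(f,f))$ must have the shape of such a nesting, which yields a solution.
\end{enumerate}

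The main obstacle is step~3 together with designing the rules so that they are simultaneously SVL, convergent, and expressive enough to concatenate variable-length pair components. SVL rules cannot move subterms between positions, so concatenation has to come from recipe nesting, with each frame term exposing a hole-like name. The first thing to try is to make each frame pair $p(\overline{x_i}(n),\overline{y_i}(n))$, and to let a public symbol $s(\cdot,\cdot)$ with the rule $s(p(x,y),p(n\text{-context}))$ erase the marker $n$ while grafting. If grafting proves impossible in pure SVL form, we fall back on encoding the instance in the rules instead, at the cost of a fixed universal MPCP-hard system. The invariant for completeness would then be that every irreducible subterm headed by $p$ equals $p(\overline{X}(t),\overline{Y}(t))$, where $X,Y$ are concatenations of $x_i$'s and $y_i$'s along a common index sequence, minus a common cancelled prefix.
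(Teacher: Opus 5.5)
\textbf{There is a genuine gap: the proposal never fixes a rewrite system, and the mechanism it relies on cannot exist in the SVL format.}

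An SVL step replaces $C[D_1[t_1],\dots,D_n[t_n]]$ by $C[t_1,\dots,t_n]$. Each lifted subterm moves straight up its own branch, and nothing is ever placed below an existing subterm. Every rewrite step is a homeomorphic-embedding decrease, cf.\ Proposition~\ref{prop:homemb}. A recipe can only put ground frame terms at its leaves. So a frame term such as $p(\overline{x_i}(e),\overline{y_i}(e))$ can never have another pair ``grafted'' into its leaf $e$, and no recipe can ``nest the frame handles in solution order''.

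Storing the words in the frame also constrains nothing. The words $x_i,y_i$ are public, and the attacker can prepend arbitrary public letters himself. Concretely, take only the rules you actually write down, the two letter-cancelling $h$-rules. The name $f$ occurs only inside $u_1$, and no rule removes a $p$ or $h$ node. Hence $h(p(f,f))$ is deducible exactly when $x_1=y_1$, and the reduction encodes nothing. Your steps 2 and 3 refer to grafting rules that are never defined. Your convergence check covers only the orthogonal $h$-rules.

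\textbf{The missing idea is the one you list only as a fallback, and it is what the paper does.} The PCP constraint must live in the rules. Concatenation then comes for free, because the attacker builds the candidate solution towers himself with the public unary symbols.

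For each pair the paper adds the two-hole SVL rule $f(\overline{\alpha_i}(x),\overline{\beta_i}(y),\mathit{locked}(z))\to f(x,y,\mathit{locked}(z))$. Here $C=f(\circ_1,\circ_2,\mathit{locked}(z))$, $D_1=\overline{\alpha_i}$ and $D_2=\overline{\beta_i}$, so the rule strips a whole pair from both components in lockstep. It also adds $f(x,x,\mathit{locked}(z))\to f(x,x,z)$. Its non-linear outer context unlocks only once the two components have become equal. Correctness is then deferred to Bunch et al.

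If you take this route, you still owe what both sketches leave open:
\begin{itemize}
\item The statement asks for one fixed TRS, so the instance dependence of the rules must be removed. For example, use a universal MPCP instance whose fixed pairs go into the rules and whose start pair is supplied by the frame; this is where your ``distinguished first pair'' idea belongs.
\item The frame and target must be chosen so that trivial recipes such as $f(c,c,\cdot)$ with a public name $c$ do not already succeed.
\item Convergence needs real work. For instance, if $\alpha_i=\beta_i$, then $f(\overline{\alpha_i}(c),\overline{\alpha_i}(c),\mathit{locked}(k))$ has the two distinct normal forms $f(c,c,k)$ and $f(\overline{\alpha_i}(c),\overline{\alpha_i}(c),k)$. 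Similar root overlaps arise between pair rules whose components are prefix-comparable.
\end{itemize}
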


\begin{proof}
    We reduce MPCP problem $\Pi=\{(\alpha_i,\beta_i):i=0,1,\dots,n\} \subseteq \Gamma^* \times \Gamma^*$ to deduction problem for a simultaneous subterm lifting theory. Consider a TRS with signature $\Sigma = \{f,locked\} \cup \{\overline{\alpha}: \alpha \in \Gamma\}$ where $f$ is a ternary function symbol, $locked$ is a unary function symbol, and one unary function symbol $\overline{\alpha}$ for every symbol $\alpha \in \Gamma$. For each $(\alpha_i,\beta_i)$ with $1 \le i \le n$, we add rules
    \begin{align*}
        f(\overline{\alpha_i}(x),\overline{\beta_i}(y),locked(z)) &\to f(x,y,locked(z))
    \end{align*}
    and, on top of that, the rule $f(x,x,locked(z)) \to f(x,x,z)$ in $R$.
    Each rule 
    $$f(\overline{\alpha_i}(x),\overline{\beta_i}(y),locked(z)) \to f(x,y,locked(z))$$ 
    can be written as $C[D_1[x],D_2[y]] \to C[x,y]$ where
    $$
    C[\circ_1, \circ_2] = f(\circ_1, \circ_2, locked(z)), \quad 
    D_1[\circ] = \overline{\alpha_i}(\circ), \quad \mbox{and} \quad 
    D_2[\circ] = \overline{\beta_i}(\circ).
    $$
    Finally, rule $f(x,x,locked(z)) \to f(x,x,z)$ can be written as $C[D[z]] \to C[z]$ where
    $$
    C[\circ] = f(x,x,\circ) \quad \mbox{and} \quad 
    D[\circ] = locked(\circ).
    $$
    Indeed, this rule is a simultaneous variable lifting rule with $n=2$. The correctness of reduction can be seen in \cite{ bunch2024knowledge}.
\end{proof}

\begin{corollary}
    There exists a convergent simultaneous subterm-lifting TRS for which the deduction problem is undecidable.
\end{corollary}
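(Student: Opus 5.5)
The statement to prove is the final corollary: some convergent simultaneous subterm-lifting TRS has an undecidable deduction problem. The plan is to derive it directly from the preceding theorem, using the inclusion of SVL rules in SSL rules noted in the remark after the definition of SSL. Concretely, let $R$ be the convergent SVL TRS built in the proof of the preceding theorem from an MPCP instance. I will check that every rule of $R$ is an SSL rule, and I will argue that the deduction problem is unchanged when $R$ is regarded as an SSL system.

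\textbf{Step 1 (rule inclusion).} Every SVL rule has the form $C[D_1[x_1],\ldots,D_n[x_n]]\to C[x_1,\ldots,x_n]$, with each $D_i$ nontrivial and the $x_i$ pairwise distinct variables. Taking $t_i=x_i$ shows it is an SSL rule. Applied to $R$:
\begin{itemize}
\item the rules $f(\overline{\alpha_i}(x),\overline{\beta_i}(y),locked(z))\to f(x,y,locked(z))$ are SSL rules with $n=2$, context $C[\circ_1,\circ_2]=f(\circ_1,\circ_2,locked(z))$, and $t_1=x$, $t_2=y$;
\item the rule $f(x,x,locked(z))\to f(x,x,z)$ is an SSL rule with $n=1$, context $C[\circ]=f(x,x,\circ)$, $D[\circ]=locked(\circ)$ and $t=z$.
\end{itemize}
Hence $R$ is an SSL TRS.

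\textbf{Step 2 (properties carry over).} Convergence is a property of the rewrite relation $\to_R$ alone, not of how the rules are classified. So $R$ is still a convergent SSL TRS. The equational theory it presents, and therefore the relation $\varphi\vdash_R t$, are literally the same.

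\textbf{Step 3 (transfer undecidability).} The preceding theorem gives that deciding $\varphi\vdash_R t$ is undecidable for this $R$. Since $R$ is a convergent SSL TRS, this $R$ witnesses the corollary.

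No step is a real obstacle. The only point needing care lies in the theorem being invoked rather than here: that theorem relies on the correctness of the MPCP reduction, which is cited from \cite{bunch2024knowledge}, and on the convergence of $R$. If I wanted to make the argument self-contained, I would verify convergence by checking two things:
\begin{itemize}
\item \emph{Termination.} Each rule strictly decreases term size.
\item \emph{Local confluence.} The critical pairs all join, namely overlaps between the $(\alpha_i,\beta_i)$ rules and the unlocking rule at the root $f$-position. Once the third argument has lost $locked$, none of the $(\alpha_i,\beta_i)$ rules applies any longer, which I would use to join these overlaps.
\end{itemize}
I would take both the convergence and the reduction as given by the preceding theorem.
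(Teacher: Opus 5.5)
Your Steps 1--3 are exactly the paper's (implicit) argument. The corollary follows immediately from the preceding theorem via the remark that every SVL rule is an SSL rule (take $t_i=x_i$), and convergence and $\vdash_R$ depend only on the rules, not on how they are classified.

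Your optional convergence sketch needs a correction, though. The critical pairs do \emph{not} join, and the reason you give is exactly why. No rule applies once $locked$ has been removed from the third argument, so the unlocking side of an overlap is already a normal form. Joining would therefore require the other side to reach that very term, and it does not.

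For example, suppose $\beta_i=\alpha_i\gamma$. Then $f(\overline{\beta_i}(y),\overline{\beta_i}(y),locked(z))$ rewrites by the unlocking rule to the normal form $f(\overline{\beta_i}(y),\overline{\beta_i}(y),z)$. By the $i$-th rule it rewrites to $f(\overline{\gamma}(y),y,locked(z))$. If $\gamma\neq\varepsilon$, this is itself a different normal form; if $\gamma=\varepsilon$, it reduces to the different normal form $f(y,y,z)$.

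You also missed the overlaps between two pair rules. If $\alpha_j=\alpha_i\gamma$ and $\beta_i=\beta_j\eta$ with $\gamma,\eta\neq\varepsilon$, then $f(\overline{\alpha_j}(x),\overline{\beta_i}(y),locked(z))$ has the distinct normal forms $f(\overline{\gamma}(x),y,locked(z))$ and $f(x,\overline{\eta}(y),locked(z))$.

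Replacing the variables by names gives ground counterexamples. So the system displayed in the theorem's proof is not confluent for typical MPCP instances, e.g.\ any instance containing a copy pair $(a,a)$. Your derivation of the corollary is therefore exactly as sound as the theorem it invokes, and the SVL-to-SSL transfer adds nothing problematic. However, convergence of that $R$ is not a routine critical-pair check, and a genuinely convergent encoding would be needed.
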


\section{Conclusion and Further Research}
\label{sec:conclusion}

Knowledge problems originate from security protocol verification, where one asks what an attacker can compute and distinguish. Although motivated by practice, these problems are fundamentally mathematical: their decidability depends on the algebraic structure of the underlying equational theory. Studying simplified classes, even when they are not directly motivated by cryptographic applications, helps isolate the structural sources of decidability and undecidability, and provides a principled path towards richer equational theories.

In this paper, we investigated knowledge problems through erasing semi-Thue systems and their term-rewriting analogues. 

Overall, our results demonstrate that even simple erasing structures exhibit a rich decidability landscape. We hope that understanding these fundamental classes will ultimately lead to decision procedures for more expressive equational theories arising in symbolic cryptographic protocol analysis.

We conclude with several directions for further work. The most immediate open questions are whether deduction is decidable for convergent variable-lifting and subterm-lifting TRSs over arbitrary signatures.  Static equivalence remains largely unexplored for classical classes of string-rewriting systems. Natural questions are whether it is decidable for convergent monadic, prefix-erasing, suffix-erasing, or factor-erasing STSs, and which properties of the quotient monoid $\Gamma^*/{\equiv_S}$ determine its complexity.
Lastly, the lifting rules considered here preserve a common outer context. More expressive rules may both erase contexts and move subterms between positions, for example $C[D[s],t] \to D[t]$.
Such behaviour occurs in cryptographic equations such as blind signatures:
\begin{align*}
\mathsf{unblind}
\bigl(
\mathsf{sign}(\mathsf{blind}(m,r),k),
r
\bigr)
\to
\mathsf{sign}(m,k).
\end{align*}
A possible generalization is to describe rules through mappings between positions on their two sides, with restrictions on duplication, cycles, and changes in depth. This may yield classes broader than subterm-lifting systems while retaining enough structure for normalization or locality arguments.

\bibliography{knowledge}

\appendix

\section{Proofs}

\subsection{Proof of Lemma~\ref{lem:minimal-suffix-generator}}

\begin{proof}
We first prove the forward direction. Suppose \(u \preceq_S^r v\). Then there exists \(w \in \Gamma^*\) such that \(wu \equiv_S v\). Since \(S\) is convergent and \(v\) is irreducible, we have \(wu \to_S^* v\).

We claim that there exists some \(r \in \operatorname{SufMul}_S(u)\) such that \(r\) is a suffix of \(v\). Indeed, along any reduction sequence starting from a word of the form \(wu\), prefix-erasing rules can only delete letters to the left of some suffix of \(u\), while preserving that suffix or replacing it by a shorter suffix which is still a right multiple of \(u\). Therefore the final irreducible word \(v\) must have some suffix \(r\) such that \(r\) is a suffix of \(u\) and \(u \preceq_S^r r\). Since \(v\) is irreducible, this suffix \(r\) is also irreducible, and hence \(r \in \operatorname{SufMul}_S(u)\).

Now \(u_{\min}\) and \(r\) are both suffixes of \(u\), and \(u_{\min}\) has minimum length among all elements of \(\operatorname{SufMul}_S(u)\). Since suffixes of a fixed word are linearly ordered by the suffix relation, \(u_{\min}\) is a suffix of \(r\). Since \(r\) is a suffix of \(v\), it follows that \(u_{\min}\) is also a suffix of \(v\).

Conversely, suppose that \(u_{\min}\) is a suffix of \(v\). Then there exists \(v' \in \Gamma^*\) such that \(v=v'u_{\min}\). Since \(u_{\min} \in \operatorname{SufMul}_S(u)\), we have \(u \preceq_S^r u_{\min}\). Hence there exists \(w \in \Gamma^*\) such that \(wu \equiv_S u_{\min}\). By congruence of \(\equiv_S\), we may add the same left context \(v'\) to both sides, obtaining \(v'wu \equiv_S v'u_{\min}=v\). Therefore there exists a word \(w'=v'w\) such that \(w'u \equiv_S v\). Hence \(u \preceq_S^r v\).
\end{proof}

\subsection{Proof of Lemma~\ref{lem:onealphabetremoval}}

\begin{proof}
Suppose first that $cu_1\preceq_S^r u_1$. Choose a shortest word
$w\in\Gamma^*$ such that $wcu_1\equiv_S u_1$. Since $S$ is
convergent and $u_1$ is irreducible, we have $wcu_1\to_S^*u_1$.

The minimality of $w$ implies that $w$ is irreducible: otherwise,
reducing a redex inside $w$ would produce a shorter witness. Since
$w$ and $cu_1$ are both irreducible, the first redex in $wcu_1$
must cross the boundary between them. Hence we may write
$w=w_1w_2$ and $u_1=u_{11}u_{12}$, with $w_2\in\Gamma^+$, such that
\[
w_2cu_{11}=xy
\]
for some rule $xy\to y$ in $R$.

The letter $c$ must occur in the part $x$. Indeed, if $c$ occurred
in $y$, then $w_2=xs$ and $y=scu_{11}$ for some $s\in\Gamma^*$.
The first rewrite step would then give
\[
w_1xscu_1\to_S w_1scu_1,
\]
so $w_1s$ would be a shorter witness than $w$, a contradiction.

Therefore $x=w_2cz$ for some $z\in\Gamma^*$. From
$w_2cu_{11}=xy$, it follows that $u_{11}=zy$. Thus
$u_1=zyu_{12}$, and the required form is obtained by taking
$t=u_{12}$.

Conversely, suppose that a rule $xy\to y$ and words $w_2,z,t$
satisfy $x=w_2cz$ and $u_1=zyt$. Then
\[
(zw_2)cu_1
=zw_2czyt
=zxy t
\to_S zyt
=u_1.
\]
Hence $cu_1\preceq_S^r u_1$.

Finally, this characterization yields a decision procedure. For
each rule $xy\to y$ in the finite set $R$, enumerate the finitely
many decompositions $x=w_2cz$ with $w_2\in\Gamma^+$, and check
whether $zy$ is a prefix of $u_1$. Therefore
$cu_1\preceq_S^r u_1$ is decidable.
\end{proof}

\subsection{Proof of Lemma~\ref{lem:recursiveminsufmul}}

\begin{proof}
Let $u=cu_1$. If $u\not\preceq_S^r u_1$, then no proper suffix
$v$ of $u$ can belong to $\operatorname{SufMul}_S(u)$. Indeed, since
$v$ is a suffix of $u_1$, we may write $u_1=pv$. If
$u\preceq_S^r v$, then $qu\equiv_S v$ for some $q\in\Gamma^*$, and
hence $pqu\equiv_S u_1$, contradicting
$u\not\preceq_S^r u_1$. Therefore $u_{\min}=u$.

Suppose instead that $u\preceq_S^r u_1$. Since $u=cu_1$, we also
have $u_1\preceq_S^r u$. Thus, for every suffix $v$ of $u_1$,
\[
u\preceq_S^r v
\quad\Longleftrightarrow\quad
u_1\preceq_S^r v.
\]
Moreover, $u_1\in\operatorname{SufMul}_S(u)$ and $|u_1|<|u|$.
Hence the shortest element of $\operatorname{SufMul}_S(u)$ is
$(u_1)_{\min}$.

The condition $cu_1\preceq_S^r u_1$ is decidable by the preceding
lemma. Therefore the displayed recursion is effective, and it
terminates because each recursive call removes the first letter of
the input word.
\end{proof}

\subsection{Complexity of the Right-Divisibility Algorithm for Prefix-Erasing STS}
\label{app:rdiv-complexity}

We give a more explicit complexity analysis of the procedure used in the proof of Theorem~\ref{thm:rdiv-prefix-erasing}. Let $S=(\Gamma,R)$, let $\rho=|R|$, let $L=\max\{|\ell|:\ell\to r\in R\}$, and let $m=|u|+|v|$.

The procedure has three stages:
\begin{enumerate}
\item compute the normal forms $u^{\downarrow}$ and $v^{\downarrow}$;
\item compute $(u^{\downarrow})_{\min}=\operatorname{MinSufMul}_S(u^{\downarrow})$;
\item test whether $(u^{\downarrow})*{\min}$ is a suffix of $v^{\downarrow}$.
\end{enumerate}

\paragraph*{Normalization.} Since every prefix-erasing system is length-reducing, normal forms under a fixed finite convergent system can be computed in linear time. Thus, $u^{\downarrow}$ and $v^{\downarrow}$ can be computed in time $O_S(|u|+|v|)$. Writing $\rho=|R|$, $L=\max\{|\ell|:\ell\to r\in R\}$, and $m=|u|+|v|$, the dependence on $S$ can be made explicit as $O(\rho L+Lm)$, including preprocessing of the rule matcher.

\paragraph*{Minimal suffix multiplier.}
For each suffix $cu_1$ of $u^{\downarrow}$, the algorithm tests whether $cu_1\preceq_S^r u_1$. By the preceding characterization, this amounts to checking whether there exist a rule $xy\to y$ and a factorization $x=w_2cz$, with $w_2\in\Gamma^+$, such that $zy$ is a prefix of $u_1$.

A direct implementation takes $O(\rho L^2)$ time per suffix and hence $O(\rho L^2|u^{\downarrow}|)$ time in total. For fixed $S$, the relevant factorizations can be precomputed, so each test takes constant time and $\operatorname{MinSufMul}_S(u^{\downarrow})$ is computed in $O(|u^{\downarrow}|)$ time. Here suffixes are represented by positions in the word, avoiding any copying overhead.

\paragraph*{Final suffix test.}
Testing whether $(u^{\downarrow})_{\min}$ is a suffix of $v^{\downarrow}$ takes $O(|u^{\downarrow}|+|v^{\downarrow}|)$ time.

\paragraph*{Combined bound.}
The complete procedure therefore runs in time
$O(\rho L+Lm+\rho L^2|u^{\downarrow}|+|u^{\downarrow}|+|v^{\downarrow}|)=O(\rho L^2(m+1))$.
In particular, for fixed $S$, it runs in time $O(|u|+|v|)$. If the normal forms are supplied, the remaining computation is linear in $|u^{\downarrow}|+|v^{\downarrow}|$.

\subsection{Proof of Lemma~\ref{lem:backward-delta}}

\begin{proof}
    We prove the claim by backward induction on $k$.

    For $k=n+1$, we have $B_{n+1}^S(u,v)=\{v\}$. On the other hand,
    \[
        \{x\in \operatorname{IRR}(S) : x\varepsilon\equiv_S v\}
        =
        \{x\in \operatorname{IRR}(S) : x\equiv_S v\}.
    \]
    Since $S$ is convergent and both $x$ and $v$ are irreducible, $x\equiv_S v$ holds if and only if $x=v$. Hence the claim holds for $k=n+1$.

    Now suppose the claim holds for $k+1$. We prove it for $k$. Let $x\in B_k^S(u,v)$. By definition of $B_k^S(u,v)$, there exists $z\in B_{k+1}^S(u,v)$ such that $x\in \delta(u_k,z)$. By definition of $\delta$, this means $xu_k\equiv_S z$. By the induction hypothesis applied to $z\in B_{k+1}^S(u,v)$, we have $zu_{k+1}\cdots u_n\equiv_S v$. Therefore
    \[
        xu_ku_{k+1}\cdots u_n
        \equiv_S
        zu_{k+1}\cdots u_n
        \equiv_S
        v.
    \]
    Hence $x$ belongs to the right-hand side.

    Conversely, suppose $x\in \operatorname{IRR}(S)$ and $xu_ku_{k+1}\cdots u_n\equiv_S v$. Let $z=(xu_k)^\downarrow$ be the normal form of $xu_k$. Then $z\in \operatorname{IRR}(S)$ and $xu_k\equiv_S z$. Hence
    \[
        zu_{k+1}\cdots u_n
        \equiv_S
        xu_ku_{k+1}\cdots u_n
        \equiv_S
        v.
    \]
    By the induction hypothesis, $z\in B_{k+1}$. Since $xu_k\equiv_S z$, we have $x\in \delta(u_k,z)$. Therefore $x\in B_k$.

    This proves the claim for every $k=1,2,\dots,n+1$.
\end{proof}

\subsection{Proof of Corollary~\ref{cor:rightdivisibility-b1}}

\begin{proof}
    By definition, $u_ku_{k+1}\cdots u_n \preceq^r_S v$ means that there exists $w\in \Gamma^*$ such that $wu_ku_{k+1}\cdots u_n\equiv_S v$.

    If such a word $w$ exists, let $x=w^\downarrow$. Then $x\in \operatorname{IRR}(S)$ and $xu_ku_{k+1}\cdots u_n\equiv_S v$. By the proposition, $x\in B_k^S(u,v)$, so $B_k^S(u,v)\neq \varnothing$.

    Conversely, if $B_k^S(u,v)\neq \varnothing$, choose $x\in B_k$. By the proposition, $xu_ku_{k+1}\cdots u_n\equiv_S v$. Hence $u_ku_{k+1}\cdots u_n \preceq^r_S v$.

    Taking $k=1$ gives $u\preceq^r_S v$ if and only if $B_1\neq \varnothing$.
\end{proof}

\subsection{Proof of Lemma~\ref{lem:finite-delta}}

\begin{proof}
Fix $c \in \Gamma$ and $u \in \operatorname{IRR}(S)$. Recall that $\delta(c,u)=\{v \in \operatorname{IRR}(S) : vc \equiv_S u\}$. We show that there are only finitely many possible words $v$ in this set.

Let $v \in \delta(c,u)$. Then $v$ is irreducible and $vc \equiv_S u$. Since $v$ is irreducible, the word $v$ contains no redex. Hence any redex in $vc$, if one exists, must involve the final letter $c$. Therefore there are two cases.

First, suppose that $vc$ is irreducible. Since $u$ is also irreducible and $vc \equiv_S u$, convergence of $S$ implies that $vc=u$. Thus, in this case, $v$ is uniquely determined: it is obtained from $u$ by deleting the final letter $c$, if $u$ ends in $c$. If $u$ does not end in $c$, then there is no such $v$ in this case. Hence this case contributes at most one possible value of $v$.

Now suppose that $vc$ is reducible. Since $v$ is irreducible, every redex in $vc$ must involve the final letter $c$. Choose one such redex. Since this redex contains the final letter of $vc$, it occurs as a suffix of $vc$. Hence there exist a word $w$ and a rule $xy \to x$ in $R$, with $y \neq \epsilon$, such that $vc=wxy$. Since this occurrence of $xy$ ends with the final letter $c$, we may write $y=y'c$ for some word $y' \in \Gamma^*$. Hence $v=wxy'$.

Applying the rule $xy \to x$ gives $vc=wxy \to wx$. Since $v=wxy'$, the word $wx$ is a prefix of $v$. Because $v$ is irreducible, every factor of $v$ is irreducible, so $wx$ is irreducible. Also, $wx \equiv_S vc \equiv_S u$. Since both $wx$ and $u$ are irreducible, convergence of $S$ implies that $wx=u$.

Therefore, in the reducible case, every possible $v$ has the form $v=wxy'=uy'$, where $xy \to x$ is a rule in $R$ and $y=y'c$. Once the rule $xy \to x$ is fixed, the word $y'$ is fixed, and $v$ can occur only if $u$ ends in $x$, in which case $w$ is uniquely determined by $u=wx$. Hence each rule of $R$ contributes at most one possible value of $v$.

Since $R$ is finite, the reducible case contributes only finitely many possible values of $v$. Together with the at most one value from the irreducible case, this proves that $\delta(c,u)$ is finite.
\end{proof}

\subsection{Optimizing The Forward Search}\label{app:optimizing}

\begin{proposition}[Direct jump]
Let $u=cu_1\in \operatorname{IRR}(S)$, where $c\in \Gamma$ and $u_1\in \Gamma^*$. Suppose there exist a rule $xy\to y$ in $R$, words $w_2\in \Gamma^+$ and $z,t\in \Gamma^*$ such that $x=w_2cz$ and $u_1=zyt$. Then $u\preceq_S^r yt$.

In particular, $yt$ is a suffix of $u$ and is right-divisible from $u$.
\end{proposition}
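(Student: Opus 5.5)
We exhibit an explicit left multiplier, in the same way as in the converse direction of Lemma~\ref{lem:onealphabetremoval}. Since $u_1=zyt$ and $x=w_2cz$, we have
\[
u = cu_1 = czyt, \qquad w_2u = w_2czyt = xyt.
\]
The rule $xy\to y$ can be applied to the prefix $xy$ of $w_2u$, which gives $w_2u\to_S yt$ in one step. Therefore $w_2u\equiv_S yt$. Taking $w=w_2$ shows $u\preceq_S^r yt$ directly from the definition of right-divisibility. No prefix $z$ is needed here, unlike in Lemma~\ref{lem:onealphabetremoval}. There the target was $u_1=zyt$, so the erased $z$ had to be put back; here the target is $yt$ itself.

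For the second claim, $u=czyt$, so $yt$ is a suffix of $u$ with $u=(cz)\,yt$. The first part already shows that it is right-divisible from $u$. If we also want membership in $\operatorname{SufMul}_S(u)$, which requires irreducibility, this is automatic. The word $yt$ is a suffix of the irreducible word $u$, and every factor of an irreducible word is irreducible. Hence $yt\in\operatorname{SufMul}_S(u)$.

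There is no real obstacle. The only point needing care is the orientation: the rule erases the prefix $x$, and the left factor $w_2$ lies outside $u$ and is supplied by the multiplier. Note also that $|yt|<|u|$, since $cz$ is nonempty. This is what makes the jump useful: the recursion of Lemma~\ref{lem:recursiveminsufmul} can skip directly to $yt$ rather than stepping through $u_1$ one letter at a time.
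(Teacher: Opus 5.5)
Your proof is correct and matches the paper's argument: you take the multiplier $w_2$, observe that $w_2u = xyt$, and apply $xy\to y$ once to get $w_2u \to_S yt$. Your extra observations are also correct. These are that $yt$ is a proper, irreducible suffix of $u$, hence lies in $\operatorname{SufMul}_S(u)$, and that $|yt|<|u|$. This is exactly how the jump algorithm later uses the proposition.
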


\begin{proof}
Since $u=cu_1=czyt$ and $x=w_2cz$, we have $w_2u=w_2czyt=xyt$. Hence $w_2u=xyt\to_S yt$. Therefore $u\preceq_S^r yt$.
\end{proof}

\begin{proposition}[Jump correctness]
Let $u\in \operatorname{IRR}(S)$. Suppose $s$ is a suffix of $u$ such that $u\preceq_S^r s$. Then $u_{\operatorname{min}}=s_{\operatorname{min}}$.

In particular, in the situation of the previous proposition, if $s=yt$, then $u_{\operatorname{min}}=s_{\operatorname{min}}$.
\end{proposition}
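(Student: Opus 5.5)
The plan is to prove the statement through the characterisation of right-divisibility from suffixes given by Lemma~\ref{lem:minimal-suffix-generator}. Concretely, we show that $\operatorname{SufMul}_S(u)$ and $\operatorname{SufMul}_S(s)$ share their suffixes of length at most $|s|$, so that their shortest elements agree. Here $u$ and $s$ are irreducible ($s$ is a factor of the irreducible word $u$), and $S$ is the fixed convergent prefix-erasing system of this section.

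\textbf{Step 1.} We first show that $u$ and $s$ are mutually right-divisible. Since $s$ is a suffix of $u$, write $u=ps$; then $p\cdot s=u$ gives $s\preceq_S^r u$ with witness $p$. Together with the hypothesis $u\preceq_S^r s$, we obtain both directions.

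\textbf{Step 2.} Next we show that $u$ and $s$ right-divide exactly the same words. We use transitivity of $\preceq_S^r$: if $w_1a\equiv_S b$ and $w_2b\equiv_S c$, then $w_2w_1a\equiv_S c$ by congruence. Consequently, for every word $v$, we have $u\preceq_S^r v$ if and only if $s\preceq_S^r v$.

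\textbf{Step 3.} Now compare the two sets of suffix multiples. Every suffix of $s$ is a suffix of $u$, so by Step 2,
\[
\operatorname{SufMul}_S(s)=\{v\in\operatorname{SufMul}_S(u): v \text{ is a suffix of } s\}.
\]
Since $s\in\operatorname{SufMul}_S(u)$ and $u_{\min}$ has minimal length, $|u_{\min}|\le|s|$. Suffixes of a fixed word are linearly ordered, so $u_{\min}$ is a suffix of $s$, and hence $u_{\min}\in\operatorname{SufMul}_S(s)$. Conversely, $s_{\min}\in\operatorname{SufMul}_S(u)$. Minimality in each set then gives $|u_{\min}|=|s_{\min}|$. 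Two suffixes of $u$ of the same length coincide, so $u_{\min}=s_{\min}$.

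\textbf{Step 4.} For the ``in particular'' clause, the preceding direct-jump proposition gives that $s=yt$ is a suffix of $u$ with $u\preceq_S^r yt$. So the main claim applies directly.

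No step is a real obstacle. The only point needing care is the equality of the two sets in Step 3, which rests on transitivity of right-divisibility and on $s$ being a suffix of $u$.
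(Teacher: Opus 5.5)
Your proof is correct and follows essentially the same route as the paper. Mutual right-divisibility of $u$ and $s$ makes the suffixes of $s$ lying in $\operatorname{SufMul}_S(u)$ coincide with $\operatorname{SufMul}_S(s)$, and since $s\in\operatorname{SufMul}_S(u)$, the element $u_{\min}$ must be one of those suffixes; your opening reference to Lemma~\ref{lem:minimal-suffix-generator} is unnecessary, since neither your argument nor the paper's uses it.
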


\begin{proof}
Since $s$ is a suffix of $u$, there exists $p\in \Gamma^*$ such that $u=ps$. Hence $s\preceq_S^r u$.

By assumption, we also have $u\preceq_S^r s$. Therefore $u$ and $s$ are mutually right-divisible. Hence, for every suffix $v$ of $s$, we have $u\preceq_S^r v$ if and only if $s\preceq_S^r v$. Thus the suffixes of $s$ that lie in $\operatorname{SufMul}_S(u)$ are exactly the suffixes of $s$ that lie in $\operatorname{SufMul}_S(s)$.

Moreover, since $s\in \operatorname{SufMul}_S(u)$, the shortest element of $\operatorname{SufMul}_S(u)$ must occur among the suffixes of $s$. Therefore $u_{\operatorname{min}}=s_{\operatorname{min}}$.
\end{proof}

\begin{proposition}[Optimised recursive step]
Let $u=cu_1\in \operatorname{IRR}(S)$. Define $J(u)$ to be the set of all words $yt$ such that there exist a rule $xy\to y$ in $R$, words $w_2\in \Gamma^+$ and $z,t\in \Gamma^*$ satisfying $x=w_2cz$ and $u_1=zyt$.

If $J(u)=\varnothing$, then $u_{\operatorname{min}}=u$. If $J(u)\neq \varnothing$, then for every $s\in J(u)$, we have $u_{\operatorname{min}}=s_{\operatorname{min}}$.
\end{proposition}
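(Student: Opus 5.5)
The proof assembles the two preceding propositions (Direct jump and Jump correctness) together with Lemma~\ref{lem:onealphabetremoval} and Lemma~\ref{lem:recursiveminsufmul}. The set $J(u)$ is exactly the set of words that witness the condition in Lemma~\ref{lem:onealphabetremoval}. So the plan is to split on whether $J(u)$ is empty and, in each case, reduce to an already established statement.

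\emph{Case $J(u)=\varnothing$.} By the definition of $J(u)$, there is no rule $xy\to y$ in $R$ together with a factorisation $x=w_2cz$, $w_2\in\Gamma^+$, and $u_1=zyt$. Lemma~\ref{lem:onealphabetremoval} is an equivalence, so this means $cu_1\not\preceq_S^r u_1$. The second branch of the recursion in Lemma~\ref{lem:recursiveminsufmul} then gives $u_{\min}=\operatorname{MinSufMul}_S(cu_1)=cu_1=u$.

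\emph{Case $J(u)\neq\varnothing$.} Fix any $s\in J(u)$, with witnesses $x,y,w_2,z,t$. By the Direct jump proposition, $u\preceq_S^r s=yt$, and $yt$ is a suffix of $u_1$, hence of $u$. To apply Jump correctness, which concerns $s_{\min}$ and therefore needs $s\in\operatorname{IRR}(S)$, we observe that $s$ is a suffix of the irreducible word $u$. Every factor of an irreducible word is irreducible, so $s$ is irreducible. The Jump correctness proposition now gives $u_{\min}=s_{\min}$. The witnesses were arbitrary, so this holds for every $s\in J(u)$.

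I expect no real obstacle. The only points needing care are that $s$ is a genuine suffix of $u$, which holds because $u=czyt$, and that it is irreducible, so that $\operatorname{MinSufMul}_S(s)$ is defined. Both follow directly from the definitions. As a side remark, the result shows that all jump targets in $J(u)$ yield the same minimum, so the implementation may choose, say, the shortest one.
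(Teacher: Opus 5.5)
Your proof is correct and follows the paper's argument exactly. The empty case goes through the equivalence in Lemma~\ref{lem:onealphabetremoval} and then Lemma~\ref{lem:recursiveminsufmul}; the nonempty case goes through Direct jump and then Jump correctness. You also check explicitly that $s$ is irreducible, as a factor of the irreducible word $u$, so that $s_{\min}$ is defined — a point the paper leaves implicit.
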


\begin{proof}
By the finite test, $J(u)=\varnothing$ is equivalent to $cu_1\not\preceq_S^r u_1$. Hence, by the previous recursive characterisation, $u_{\operatorname{min}}=u$.

Now suppose $J(u)\neq \varnothing$, and let $s\in J(u)$. By the direct jump proposition, $u\preceq_S^r s$. Also, $s$ is a suffix of $u$. Hence, by the jump correctness proposition, $u_{\operatorname{min}}=s_{\operatorname{min}}$.
\end{proof}

\begin{algorithm}[H]
\caption{Compute $u_{\operatorname{min}}$ with jumps}
\begin{algorithmic}[1]
\Require An irreducible word $u\in \Gamma^*$
\Ensure The shortest word $u_{\operatorname{min}}$ in $\operatorname{SufMul}_S(u)$

\State $m\gets u$

\While{$m\neq \varepsilon$}
    \State Write $m=cm_1$, where $c\in \Gamma$ and $m_1\in \Gamma^*$
    \State $J\gets \varnothing$

    \ForAll{rules $xy\to y$ in $R$}
        \ForAll{factorisations $x=w_2cz$ with $w_2\in \Gamma^+$ and $z\in \Gamma^*$}
            \If{$zy$ is a prefix of $m_1$}
                \State Write $m_1=zyt$
                \State $J\gets J\cup\{yt\}$
            \EndIf
        \EndFor
    \EndFor

    \If{$J=\varnothing$}
        \State \Return $m$
    \Else
        \State Choose $s\in J$ of smallest length
        \State $m\gets s$
    \EndIf
\EndWhile

\State \Return $\varepsilon$
\end{algorithmic}
\end{algorithm}

\begin{proposition}[Complexity of the jump algorithm]
Let $S=(\Gamma,R)$ be a finite convergent prefix-erasing STS. Let $\rho=|R|$ and let $L=\max\{|l|:(l,r)\in R\}$. Given an irreducible input word $u\in \Gamma^*$ of length $n$, the jump algorithm computes $u_{\operatorname{min}}$ in time $O(\rho L^2 n)$ under a straightforward implementation.

In particular, if $S$ is fixed, then the algorithm runs in time $O(n)$.
\end{proposition}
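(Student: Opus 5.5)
The statement to prove is the complexity bound for the jump algorithm: it computes $u_{\min}$ in time $O(\rho L^2 n)$. The argument splits into two parts. First, correctness and termination follow from the propositions already established. Second, the running time is bounded by counting loop iterations and the work done in each one.

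\emph{Correctness and termination.} We use the invariant that at the start of every iteration of the while loop, $m$ is a suffix of $u$ and $u_{\min}=m_{\min}$. This holds initially, since $m=u$.

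Suppose the test in an iteration finds $J=\varnothing$. By the Optimised recursive step proposition, $m_{\min}=m$. The invariant then gives $u_{\min}=m$, so returning $m$ is correct.

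Otherwise $J\neq\varnothing$, and every $s\in J$ has the form $yt$ with $m_1=zyt$. So $s$ is a suffix of $m_1$, and hence of $u$. The Optimised recursive step gives $m_{\min}=s_{\min}$, so the invariant is preserved.

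If the loop exits with $m=\varepsilon$, then $u_{\min}=\varepsilon_{\min}=\varepsilon$, which is what the algorithm returns. One subtlety: the Optimised recursive step is stated for irreducible words. Every suffix of an irreducible word is irreducible, so this hypothesis holds throughout.

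\emph{Counting iterations.} Each iteration either returns or replaces $m$ by a proper suffix of $m$. Indeed, $|s|\le|m_1|<|m|$. Hence $|m|$ strictly decreases, and there are at most $n+1$ iterations.

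\emph{Cost per iteration.} Represent $m$ by its starting position in $u$, so no words are copied.
\begin{itemize}
\item Reading $c$ and $m_1$ takes $O(1)$.
\item For each of the $\rho$ rules $xy\to y$, there are at most $|x|\le L$ choices of the split point $xy$. For each, there are at most $L$ factorisations $x=w_2cz$.
\item Each factorisation requires a prefix comparison of $zy$ against $m_1$. Since $|zy|\le L$, this costs $O(L)$.
\item Recording $yt$ means recording a position, which is $O(1)$. Tracking the shortest candidate along the way costs $O(1)$ extra per candidate.
\end{itemize}
Thus one iteration costs $O(\rho\cdot L\cdot L\cdot L)$ when the split points of $xy$ are enumerated explicitly. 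This naive count is $O(\rho L^3)$, which is too large by a factor of $L$.

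\emph{The main obstacle: tightening $O(\rho L^3)$ to $O(\rho L^2)$.} There are two ways to remove the extra factor.
\begin{itemize}
\item One can note that a rule is a pair $(\ell,r)$, and $r$ determines its split $\ell=xy$ uniquely, since $y=r$. So the loop over rules involves no choice of split. Each rule then contributes at most $L$ factorisations of $x$, each checked in $O(L)$, for $O(L^2)$ per rule.
\item Alternatively, one can precompute, for each rule and letter $c$, the list of admissible $(z,y)$ pairs. This is a one-time cost depending only on $S$.
\end{itemize}
Either way, each iteration costs $O(\rho L^2)$. Multiplying by the at most $n+1$ iterations gives $O(\rho L^2 n)$, interpreted as $O(\rho L^2(n+1))$ to cover the empty input.

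For fixed $S$, the quantities $\rho$ and $L$ are constants, so the algorithm runs in time $O(n)$.
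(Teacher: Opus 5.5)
Your proposal is correct and follows essentially the same argument as the paper. Each iteration strictly shortens $m$, so there are at most $n$ (your more careful $n+1$) iterations, and each costs $O(\rho L^2)$: there are $\rho$ rules, at most $L$ factorisations $x=w_2cz$ per rule, and an $O(L)$ prefix check of $zy$ against $m_1$. Your correctness invariant $u_{\min}=m_{\min}$, your use of positions to represent suffixes, and your observation that $y=r$ fixes the split of each left-hand side are all implicit in the paper's proof, which treats each rule as a fixed pair $xy\to y$.
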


\begin{proof}
At each iteration, the algorithm considers the current word $m$. If $m=\varepsilon$, it terminates. Otherwise, write $m=cm_1$, where $c\in \Gamma$ and $m_1\in \Gamma^*$.

The algorithm then searches through all rules $xy\to y$ in $R$. There are $\rho$ such rules. For each rule, it enumerates all factorisations $x=w_2cz$ with $w_2\in \Gamma^+$ and $z\in \Gamma^*$. Since $x$ is a subword of the left-hand side $xy$, we have $|x|\leq L$. Hence there are at most $L$ possible factorisations of this form.

For each such factorisation, the algorithm checks whether $zy$ is a prefix of $m_1$. Since $zy$ is a suffix of the left-hand side $xy$, we have $|zy|\leq L$. Therefore each prefix check costs $O(L)$ time. Thus one full search for possible jumps costs $O(\rho L^2)$.

If no jump is found, the algorithm terminates. If a jump is found, the algorithm replaces $m$ by some $s\in J$, where $s=yt$ and $m=czyt$. Since $c\in \Gamma$, the new word $s$ is a proper suffix of $m$. Hence $|s|<|m|$. Therefore each successful iteration strictly decreases the length of the current word by at least $1$.

Since the initial word has length $n$, there can be at most $n$ successful iterations. Therefore the total cost of all iterations is $O(\rho L^2 n)$. If $S$ is fixed, then $\rho$ and $L$ are constants, so the running time is $O(n)$.
\end{proof}

\begin{corollary}
Let $S=(\Gamma,R)$ be a finite convergent prefix-erasing STS, and let $u,v\in \Gamma^*$ be input words with $m=|u|+|v|$. The right-divisibility algorithm using the jump computation of $u_{\operatorname{min}}$ runs in time $O(\rho Lm^2+\rho L^2m)$ under a straightforward implementation. If $S$ is fixed, this is $O(m^2)$.
\end{corollary}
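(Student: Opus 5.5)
The corollary follows by combining correctness, which is already established, with a stage-by-stage cost count for the same three-stage procedure used in the proof of Theorem~\ref{thm:rdiv-prefix-erasing}, except that the second stage is the jump algorithm. The difference from the linear bound of Appendix~\ref{app:rdiv-complexity} comes entirely from normalising with a straightforward, unoptimised implementation rather than a precompiled stack-based matcher.

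\emph{Correctness.} Because $S$ is convergent, $u\preceq_S^r v$ holds iff $u^{\downarrow}\preceq_S^r v^{\downarrow}$. By the optimised recursive step proposition, every iteration of the jump algorithm preserves the invariant $u_{\min}=m_{\min}$. The algorithm therefore returns $(u^{\downarrow})_{\min}$: it returns $m$ when $J=\varnothing$, and otherwise when $m=\varepsilon$, since $\varepsilon_{\min}=\varepsilon$. Lemma~\ref{lem:minimal-suffix-generator} then reduces the question to whether $(u^{\downarrow})_{\min}$ is a suffix of $v^{\downarrow}$.

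\emph{Normalisation, $O(\rho L m^2)$.} I would use the naive procedure: repeatedly scan the current word for a redex and rewrite it. A scan tries each of the $|w|\le m$ start positions against each of the $\rho$ left-hand sides, comparing at most $L$ letters, so one scan costs $O(\rho L m)$. Performing the rewrite by copying costs $O(m)$, which is absorbed. Every prefix-erasing rule is length-reducing, so each step shortens the word by at least one letter. Hence normalising $u$ takes at most $|u|$ steps and normalising $v$ at most $|v|$ steps, for a total of $O(\rho L m\cdot m)=O(\rho L m^2)$.

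\emph{Jump stage and suffix test.} By the complexity proposition for the jump algorithm, applied to the input $u^{\downarrow}$ with $n=|u^{\downarrow}|\le m$, computing $(u^{\downarrow})_{\min}$ costs $O(\rho L^2 m)$. If $J$ is stored as suffix positions rather than copied words, choosing the shortest $s\in J$ adds only $O(\rho L)$ per iteration, which is absorbed. The final suffix comparison costs $O(m)$. Summing the three stages gives $O(\rho L m^2+\rho L^2 m+m)=O(\rho L m^2+\rho L^2 m)$, which is $O(m^2)$ when $S$ is fixed.

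The step needing the most care is the normalisation accounting. One must check that the number of rewrite steps is bounded by the input length, which is where length reduction is used, and that no step or scan costs more than $O(\rho L m)$. Both facts follow from prefix-erasure and from the fact that words only shrink during normalisation.
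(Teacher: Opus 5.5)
Your proposal is correct and follows the paper's own argument, which uses the same three-stage cost count: normalisation in $O(\rho L m^2)$, the jump computation of $(u^{\downarrow})_{\min}$ in $O(\rho L^2 m)$ by the jump-complexity proposition, and a final $O(m)$ suffix test. Beyond the paper, you justify the naive normalisation bound (at most $m$ length-reducing rewrite steps, each found by an $O(\rho L m)$ scan) and the storage of $J$ as suffix positions, both of which the paper only asserts; your correctness remark is a harmless addition.
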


\begin{proof}
Computing normal forms of $u$ and $v$ costs $O(\rho Lm^2)$ under the straightforward rewriting implementation. By the proposition above, computing $\widehat{u}_{\operatorname{min}}$ from the irreducible word $\widehat{u}$ costs $O(\rho L^2m)$. Finally, checking whether $\widehat{u}_{\operatorname{min}}$ is a suffix of $\widehat{v}$ costs $O(m)$. Hence the total running time is $O(\rho Lm^2+\rho L^2m)$.
\end{proof}

\begin{proposition}[Complexity on irreducible inputs]
Let $S=(\Gamma,R)$ be a finite convergent prefix-erasing STS. Let $\rho=|R|$ and let $L=\max\{|l|:(l,r)\in R\}$. Suppose the input words $u,v\in \Gamma^*$ are already irreducible. Then the right-divisibility algorithm using the jump computation of $u_{\operatorname{min}}$ runs in time $O(\rho L^2|u|+|v|)$ under a straightforward implementation.

In particular, if $S$ is fixed, then the algorithm runs in time $O(|u|+|v|)$.
\end{proposition}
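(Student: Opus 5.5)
The plan is to reuse the three stages of the procedure in the proof of Theorem~\ref{thm:rdiv-prefix-erasing}, but to drop the normalization stage, since $u$ and $v$ are already irreducible and hence equal to their own normal forms. The algorithm therefore reduces to two steps: compute $u_{\min}=\operatorname{MinSufMul}_S(u)$ with the jump algorithm, and then test whether $u_{\min}$ is a suffix of $v$. Correctness follows from Lemma~\ref{lem:minimal-suffix-generator}, which applies because $v\in\operatorname{IRR}(S)$, together with the optimised recursive step, which shows that the jump algorithm returns $u_{\min}$.

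For the first step, I will invoke the proposition on the complexity of the jump algorithm. With $n=|u|$, it gives time $O(\rho L^2|u|)$. The justification is as follows. Each iteration scans the $\rho$ rules. For each rule it tries at most $L$ factorisations $x=w_2cz$. Each prefix check of $zy$ against $m_1$ costs $O(L)$. Every successful iteration replaces $m$ by a proper suffix of itself, so there are at most $|u|+1$ iterations in total.

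One implementation detail needs care here. Suffixes must be represented by a start index into $u$ rather than by copied words, so that the step ``write $m_1=zyt$ and set $m\gets yt$'' costs $O(L)$ and not $O(|m|)$. With that representation, choosing the shortest $s\in J$ amounts to taking the largest index, which is also within the per-iteration budget.

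For the second step, the suffix test compares the last $|u_{\min}|$ letters of $v$ with $u_{\min}$. This costs $O(\min(|u_{\min}|,|v|)+1)$, which is bounded by $O(|u|+|v|)$, and the $|u|$ part is absorbed into $O(\rho L^2|u|)$ because $\rho,L\ge1$ for any nonempty $R$. If $R=\varnothing$, then $u_{\min}=u$ and the bound $O(|u|+|v|)$ is immediate. Summing the two steps gives $O(\rho L^2|u|+|v|)$, and for fixed $S$ this is $O(|u|+|v|)$.

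The main obstacle is confirming that no hidden cost exceeds the stated bound. The two places to check are the index-based representation of suffixes and the fact that reading the input and locating the suffix of $v$ (using the known length of $v$) takes at most $O(|v|)$ time. Everything else follows directly from the earlier propositions.
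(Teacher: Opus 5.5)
Your proposal is correct and follows the paper's own proof. Like the paper, you skip normalisation because the inputs are already irreducible, invoke the jump-algorithm bound $O(\rho L^2|u|)$ for computing $u_{\min}$, and add the cost of the final suffix test against $v$. Your additional remarks, representing suffixes by start indices and bounding the suffix test by $O(\min(|u_{\min}|,|v|)+1)$, just make explicit implementation details that the paper's short argument leaves implicit.
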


\begin{proof}
Since $u$ and $v$ are already irreducible, we do not need to compute their normal forms. The algorithm only needs to compute $u_{\operatorname{min}}$ and then check whether $u_{\operatorname{min}}$ is a suffix of $v$.

By the complexity analysis of the jump algorithm, computing $u_{\operatorname{min}}$ from an irreducible word $u$ costs $O(\rho L^2|u|)$. After that, checking whether $u_{\operatorname{min}}$ is a suffix of $v$ costs $O(|v|)$.

Therefore the total running time is $O(\rho L^2|u|+|v|)$. If $S$ is fixed, then $\rho$ and $L$ are constants, so this becomes $O(|u|+|v|)$.
\end{proof}

\subsection{Complexity of backward delta algorithm}
\label{app:backwardcomplexity}

\begin{proposition}[Complexity of the backward algorithm]
Let $S=(\Gamma,R)$ be a finite convergent suffix-erasing STS. Let $u,v \in \Gamma^*$, and write $u=u_1u_2\cdots u_n$. Let $m=|R|$ and let $M=\max\{|\ell| : (\ell,r)\in R\}$.

The backward algorithm deciding whether $u \preceq_S^r v$ runs in time exponential in $|u|$. More precisely, if $b \leq m+1$ is an upper bound on the number of possible elements of each set $\delta(c,z)$, then the number of words generated by the algorithm is at most $1+b+b^2+\cdots+b^n$, and hence at most $O(b^n)$. Therefore the algorithm runs in time $O(b^n \cdot p(|u|,|v|,|S|))$ for some polynomial $p$.

In particular, for fixed $S$, the algorithm runs in time $O(C^{|u|}\cdot p(|u|,|v|))$ for some constant $C$ depending only on $S$.
\end{proposition}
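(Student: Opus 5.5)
The plan is to bound the size of each set $B_k$ using the structure of $\delta$ given by the proof of Lemma~\ref{lem:finite-delta}, then multiply by the cost of computing one $\delta(c,z)$. Since $S$ is convergent and suffix-erasing, normal forms of $u$ and $v$ can first be computed in time polynomial in $|u|+|v|$ and $|S|$. Rewriting only shortens words, so $|\widehat u|\le|u|$ and $|\widehat v|\le|v|$, and we may therefore take $n=|\widehat u|\le|u|$.

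The first step is to read off the bound $b$ from the proof of Lemma~\ref{lem:finite-delta}. An element $x\in\delta(c,z)$ arises in one of two ways. In the irreducible case there is at most one candidate, namely $z$ with its final letter $c$ removed. In the reducible case each rule $xy\to x$ with $y=y'c$ contributes at most the single candidate $zy'$. Hence $|\delta(c,z)|\le m+1$, so we may take $b\le m+1$. The same case analysis also gives an explicit procedure. For each of these at most $m+1$ candidates, we test irreducibility and check that its product with $c$ normalises to $z$. Each test costs time polynomial in $|z|$, $M$ and $|S|$.

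The second step is an induction on the backward stages. We have $|B_{n+1}|=1$, and $|B_k|\le b\,|B_{k+1}|$, so $|B_k|\le b^{\,n+1-k}$. Summing over all stages, the total number of words generated is at most $1+b+\cdots+b^n=O(b^n)$ for $b\ge2$, and $O(n+1)$ when $b\le1$.

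The third step, which I expect to be the main point needing care, is bounding the lengths of the generated words so that each $\delta$ computation is polynomial. Each candidate has length at most $|z|+M-1$, because $|y'|<M$. By induction, every word in $B_k$ therefore has length at most $|\widehat v|+(n+1-k)(M-1)\le|v|+|u|M$. With this bound, each $\delta(u_k,z)$ costs time polynomial in $(|u|,|v|,|S|)$, and so does testing whether $B_1=\varnothing$. Multiplying the number of generated words by this per-word cost gives the bound $O(b^n\cdot p(|u|,|v|,|S|))$. For fixed $S$, we take $C=b\le m+1$, a constant depending only on $S$.
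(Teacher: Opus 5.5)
Your proposal is correct and follows the paper's proof essentially step for step. Like the paper, you normalise the inputs, read off $|\delta(c,z)|\le m+1$ from the case analysis in the proof of Lemma~\ref{lem:finite-delta}, bound $|B_k|\le b^{\,n+1-k}$ geometrically, and multiply the resulting word count by a polynomial per-word cost justified by a length bound of the form $|v|+O(nM)$; your explicit bound $|\widehat{v}|+(n+1-k)(M-1)$ and your remark that the count is $n+1$ rather than $O(b^n)$ when $b\le 1$ (harmlessly absorbed into the polynomial factor) are small sharpenings of what the paper states.
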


\begin{proof}
We first replace $u$ and $v$ by their normal forms. Since $S$ is finite and length-reducing, normal forms are computable effectively.

Now write $u=u_1u_2\cdots u_n$. The algorithm starts with $B_{n+1}=\{v\}$ and computes $B_n,B_{n-1},\dots,B_1$ by setting $B_k=\bigcup_{z\in B_{k+1}}\delta(u_k,z)$.

For suffix-erasing systems, each set $\delta(c,z)$ is finite. More precisely, if $z$ is irreducible, then any $x\in\delta(c,z)$ arises in one of the following ways. First, if $z$ ends in $c$, then $x$ may be obtained by deleting the final $c$ from $z$. Second, for each rule of the form $pq\to p$, where $q=q'c$, there is at most one possible predecessor $x$, namely the word $zq'$, provided that it satisfies the required suffix condition and is irreducible. Thus each rule contributes at most one possible predecessor, and the case where no rewrite occurs contributes at most one more predecessor. Hence $|\delta(c,z)|\leq |R|+1=m+1$. Let $b=m+1$.

It follows that $|B_k|\leq b|B_{k+1}|$ for every $k$. Since $|B_{n+1}|=1$, we obtain $|B_k|\leq b^{n+1-k}$. Therefore the total number of words generated while computing all the sets $B_{n+1},B_n,\dots,B_1$ is at most $1+b+b^2+\cdots+b^n$, which is $O(b^n)$.

For each generated word $z$, the set $\delta(c,z)$ can be computed by checking finitely many candidates, one for each rule of $S$ together with the possible no-rewrite case. These checks involve only suffix tests, irreducibility tests, and comparisons of words whose lengths are bounded by $|v|+O(nM)$. Hence the cost of processing each generated word is polynomial in $|u|$, $|v|$, and the size of $S$.

Therefore the total running time is $O(b^n\cdot p(|u|,|v|,|S|))$ for some polynomial $p$. Since $n=|u|$, this is exponential in $|u|$. If $S$ is fixed, then $b$ and $M$ are constants, so the running time is exponential in $|u|$ with a constant base depending only on $S$.
\end{proof}

\subsection{Proof of Proposition~\ref{prop:notlocallystable}}

\begin{proof}
    Consider $R=\{f(f(x,y),z) \to f(x,z)\}$. This is a variable lifting theory since we can take $C[\circ]=f(\circ, z)$, $D[\circ] = f(\circ,y)$ and we have that $l=f(f(x,y),z)=C[D[x]]$ and $r=f(x,z)=C[x]$. 

    Consider frame $\varphi = \nu a,b.\{x_1 \mapsto f(a,b)\}$. Here, $c_R=|f(f(x,y),z)|=5$. Suppose there exists a set $\operatorname{sat}(\varphi)$ satisfying (i), (ii), (iii), and (iv).

    First, $f(a,b) \in \operatorname{sat}(\varphi)$ by (i).

    Consider infinitely many terms $t_0,t_1,\dots$ defined by recurrence $t_0=f(a,b)$, and $t_{k+1}=f(a,t_k)$ for $k \ge 1$.
     We prove that $t_k \in \operatorname{sat}(\phi)$ for all $k \ge 0$.
     Suppose $t_k \in \operatorname{sat}(\phi)$. Note that $f(f(a,b),t_k) \xrightarrow{\varepsilon}_R f(a,t_k)=t_{k+1}$. By (iii), we have $t_{k+1} \xrightarrow{*}_R C'[S_1',\dots,S_l']$ for some context $C'$ with $|C'| \le 25$ and $fn(C') \cap \{a,b\} =\emptyset$ and $S_1',\dots,S_l' \in \operatorname{sat}(\varphi)$. Since $t_k$ is in normal form, then $t_k=C'[S_1',\dots,S_l']$. If $C'$ is not a hole, then $C'=f(C_1',C_2')$ for some context $C_1'$ and $C_2'$. Then $C_1'=a$, but hence $a \in fn(C')$, a contradiction with $fn(C') \cap \{a,b\} = \emptyset$. Hence $C'$ is a hole, and thus $l=1$ and $S_1'=f(a,f(a,b)) \in \operatorname{sat}(\varphi)$.

     This means that the set $\operatorname{sat}(\varphi)$ must contain infinitely many terms, which is a contradiction. 
\end{proof}

\subsection{Proof of Proposition~\ref{prop:homemb}}

\begin{proof}
Let $l\to r$ be an SSL rule, so that $l=C[D_1[t_1],\ldots,D_n[t_n]]$ and $r=C[t_1,\ldots,t_n]$. For each $i$, repeatedly applying the third clause in the definition of homeomorphic embedding gives $D_i[t_i]\succeq_{\mathit{emb}}t_i$. Applying the second clause along the structure of the common outer context $C$ then yields $l\succeq_{\mathit{emb}}r$. Hence every rule of the TRS is homeomorphic-embedded.

The remaining claims follow from the inclusions between the classes.
\end{proof}

\end{document}